\numberwithin{equation}{section}
\newtheorem{prop}{\bf Proposition}[section]
\newtheorem{lemma}{\bf Lemma}[section]
\title{\Large Replicator--mutator dynamics of linguistic \\convergence and divergence}
\author{Henri Kauhanen\thanks{\href{mailto:henri.kauhanen@uni-konstanz.de}{henri.kauhanen@uni-konstanz.de}

\medskip

\noindent This is the accepted author manuscript (AAM) of:

\medskip

\begin{addmargin}[1.5\parindent]{3em}
\noindent Kauhanen, H. (2020) Replicator--mutator dynamics of linguistic convergence and divergence. \emph{Royal Society Open Science}, 7, 201682. \url{http://doi.org/10.1098/rsos.201682}
\end{addmargin}

\medskip

\noindent If citing, please refer to the publisher's version for final page, section and theorem numbers.}}
\affil{Zukunftskolleg, University of Konstanz,\\Konstanz, Germany}
\date{\normalsize 2020}
\begin{document}

\maketitle

\begin{abstract}
  People tend to align their use of language to the linguistic behaviour of their own ingroup and to simultaneously diverge from the language use of outgroups. This paper proposes to model this phenomenon of sociolinguistic identity maintenance as an evolutionary game in which individuals play the field and the dynamics are supplied by a multi-population extension of the replicator--mutator equation. Using linearization, the stabilities of all dynamic equilibria of the game in its fully symmetric two-population special case are found. The model is then applied to an empirical test case from adolescent sociolinguistic behaviour. It is found that the empirically attested population state corresponds to one of a number of stable equilibria of the game under an independently plausible value of a parameter controlling the rate of linguistic mutations. An asymmetric three-population extension of the game, explored with numerical solution methods, furthermore predicts to which specific equilibrium the system converges.

  \bigskip

  \noindent\emph{Keywords:} dynamical systems, evolutionary game theory, linguistic convergence and divergence, replicator--mutator equation, sociolinguistics
\end{abstract}

\section{Introduction}

Most mathematical models of linguistic variation and change fall into one of two classes: (1) those that study homogeneous populations of speakers in frameworks that admit an analytical exploration of the model's dynamics \cite{NiyBer1997,Yan2000,Mit2006}, and (2) those that aim to elucidate complex interactions of language learning and use in social networks through agent-based simulations \cite{FagEtal2010,BaxCro2016,BurEtAl2019}. The former class is usually characterized by the use of continuous-time dynamics in infinite populations; such significant idealizing assumptions often limit the range of empirical application of these models. On the other hand, models in the latter class are characterized by more microscopic attention to detail, and perhaps better empirical applicability, but at the expense of increasingly complex parameter spaces and a concomitant loss of analytical tractability outside of special cases (although sometimes considerable progress can be made by considering suitable reductions of full models, such as continuous-time limits; see e.g.~\cite{Mic2017}). This motivates the search for a modelling paradigm that falls somewhere between the two extremes.

Outside of modelling, the use of macrosocial categories such as socioeconomic class and gender is well-established within the broad field of sociolinguistics \cite{Lab2001}. This basic insight remains underexploited in the literature on mathematical models of linguistic variation and change, however, where attention ordinarily focuses on individual speakers and their interactions. Macroscopic multipopulation modelling is not uncommon in evolutionary biology \cite{Wei1995}, epidemiology \cite{Li2018} and economics \cite{VegRed1996}, where it often supplies analytical results which approximate empirical reality and provide useful Archimedean points which may assist in the analysis of more complex microscopic models. It may be expected to perform a similar role in elucidating processes of variation and change in the domain of language.

In this paper, I explore a macroscopic model of language dynamics whose inspiration is taken from a well-established empirical observation: that speakers tend to converge in their use of language to that of their immediate social group, and diverge from language use in other, possibly polarly opposed, social groups \cite{DoiEtAl1976,GilOga2007,Bab2010,Par2012}. These processes together contribute to the construction and maintenance of sociolinguistic identity: as different social groups diverge in their use of language, those contrasting linguistic behaviours take on the function of indexing group identity. The model is thus put forward both as a mathematical formalization of linguistic identity dynamics, mediated through convergence and divergence, and as an exploration of what sorts of benefits may be reaped from the study of structured populations at a high level of abstraction. The model assumes a distinction between a number of social groups but treats individual speakers within each group identically---putting, in the spirit of the earliest sociolinguistic traditions, the community before the individual \cite[p.~7]{Lab2010}. Indeed, the model assumes continuous-time dynamics operating in a compartmentalized infinite population with fully random mixing within and across compartments. Nevertheless, it is rich enough to predict an empirically attested sociolinguistic pattern and to suggest additional predictions for further empirical testing.

A lively debate now exists about how many and what ingredients are minimally needed in a mathematical model of linguistic variation and change, with proposals ranging from neutral or near-neutral models \cite{Bly2012,StaEtal2016,Kau2017} to models incorporating intrinsic biases acting on competing linguistic variants \cite{FagEtal2010,BlyCro2012}. The crucial issue is whether simple frequency matching in an environment characterized by various accidental asymmetries and stochastic noise is sufficient to give rise to classical patterns of variation and change, such as the S-curve \cite{Den2003}, or whether stronger assumptions about deterministic biases are needed (see \cite{Mic2019} for pertinent discussion and a useful taxonomy of modelling approaches). The results presented in this paper suggest that classical patterns of variation and change can arise from an intuitively plausible source, but nevertheless one that has not received much attention in the modelling literature---the opposing forces arising from the dynamics of sub-populations which lie, in one sense or another, in an antagonistic relation to each other. In such a setting, said patterns may arise even in the absence of directed biases acting on the linguistic variants themselves, that is to say, variation and change may in these cases be largely externally, more particularly socially, governed.

The main contribution of the present paper, however, is in showing that linguistic identity dynamics, and the ensuing social stratification of linguistic variants, can be usefully modelled using tools from the mathematical theory of evolutionary games \cite{HofSig1998,San2010}. A formal investigation of this model reveals that the ``convergence--divergence game'', as I shall call it below, has a number of stable equilibria depending on a combination of global parameters, namely, parameters describing the strength of ingroup social convergence and outgroup divergence on the one hand, and parameters describing the linguistic mutation (innovation or mistransmission) dynamics of competing variants, on the other. Empirically, one of these stable equilibria is found to correspond to data on sociolinguistic stratification in adolescents \cite{Eck1988} for an independently estimated value of the relevant mutation rate parameter. This result, derived from a simple instance of the game which is analytically tractable, lends a degree of empirical plausibility to the mathematical model. Consideration of the minutiae of real-life linguistic and social dynamics requires, however, certain relaxations of idealizing assumptions in the simple model, motivating the numerical exploration of a somewhat more complex instance of the model, as will be discussed at length below.

A number of applications of evolutionary game-theoretic (EGT) modelling are now available in the literature on language dynamics \cite{KomEtal2001,Mit2003,Now2006,Jaeg2007,Jaeg2008,Deo2015,AheCla2017,BauRit2017,Yan2017,Zeh2019}. These will be briefly discussed in Section \ref{sec:EGT} in order to contextualize the contribution of the present paper. In Section \ref{sec:game}, the convergence--divergence game is defined on a general, abstract level. In the language of EGT, the game incorporates nonlinear fitness and thus falls among the class of so-called playing-the-field games \cite{MaySmi1982}. The practical consequence of this is that the mathematical analysis of the game is non-trivial, as existing results for games with linear fitness (matrix games; \cite{VinBro2005}) are not available. Nevertheless, in Section \ref{sec:symmetry} I show that the dynamic equilibria of the nonlinear convergence--divergence game can be solved in the case of two strategies competing in two populations, if the effects of social alignment and mutation are symmetric in a sense to be discussed below. Armed with these analytical results, in Section \ref{sec:jocks-burnouts} I proceed to compare one of the stable equilibria with Eckert's \cite{Eck1988} data on the backing of the STRUT vowel /\textipa{2}/ in two contrasting but interacting adolescent social groups, ``jocks'' and ``burnouts''. Crucially, the mutation rate can be estimated in this case from available literature on the confusability of speech sounds \cite{CutEtAl2004}. Given this estimate, it is found that the sociolinguistic data are predicted by the formal game, in the sense that the empirical data point falls in a phase of the game in which a number of relevant stable equilibria are available. In Section \ref{sec:adults}, I explore an extension of the game for multiple (more than two) populations, taking note of the crucial role that the population of adults plays in the empirical situation. Although analytical tractability is lost, numerical solutions of the multi-population model show that the model again aligns with empirical data. Importantly, the extension explains why one rather than another equilibrium is attained in this particular situation. These findings have implications for the mathematical modelling of language dynamics as well as the empirical study of sociolinguistics; they will be discussed in Section \ref{sec:discussion}.

\section{Evolutionary games and language dynamics}\label{sec:EGT}

At the most abstract level, an evolutionary game is a formal representation of frequency-dependent selection (and possibly mutation) in a population of interacting players. Each player has access to some set of strategies, and upon meeting other players in a sequence of contests, chooses one of these. The outcome of a contest is a payoff to each interacting party. Classically, payoffs are assumed to be fixed functions of the strategies involved in the contest, and can be represented in a matrix if contests are pairwise \cite{VinBro2005}. In such a game, the payoffs determine the fitness of each strategy and, together with the game's dynamic (see below) also the evolutionary fate of the population, such as what strategies, or perhaps mixes of strategies, are stable.

Since its emergence in the 1970s \cite{TayJon1978,MaySmiPri1973,MaySmi1982}, EGT has reached a level of maturity in the fields of mathematical biology and economics. As several authors have noted, however, this framework is also applicable to the modelling of language dynamics, if its ingredients are suitably interpreted \cite{KomEtal2001,Mit2003,Now2006,Jaeg2007,Jaeg2008,Deo2015,AheCla2017,BauRit2017,Yan2017,Zeh2019}. In general, it is possible to equate strategies with linguistic variants, i.e.~the values of a linguistic variable \cite{Deo2015}, or with entire grammars \cite{KomEtal2001}. Similarly, contests can be interpreted either as events of linguistic interaction \cite{Yan2017}, or as an abstract cumulative representation of the extended process of language acquisition in childhood \cite{Now2006}. The interpretation of payoffs and fitness also varies from application to application. In \cite{Deo2015}, for example, the fitness of strategies (competing form--meaning mappings) results from a combination of communicative success and formal linguistic economy. In \cite{BauRit2017}, on the other hand, strategies constitute competing stress placements on lexical items (the ``players'' of the game) and fitness results from eurhythmicity (or lack thereof) at the level of entire phrases. It should be noted that the notion of fitness is by no means new to linguistics, but rather, the above interpretations are related in obvious ways to traditional accounts of the differential adaptedness of linguistic variants, ranging from communicative function and contrast maintenance \cite{Ant1989,Mar1955} through prestige and other kinds of social biases \cite{BlyCro2012,Lab1972} to parsing advantage \cite{Yan2002,HeyWal2013} and economy of computation, perception or articulation \cite{Oha1981,Oha1983}.

One advantage of EGT-based modelling over other approaches, such as agent-based simulations, is that analytical results can often be obtained if the assumptions entering the game's definition are suitably abstract. This renders EGT models transparent \cite{BauRit2017}, as the complete phase space of the model can be succinctly characterized, along with any possible bifurcations the system may undergo in response to variation in its control parameters. Even when analytical tractability is lost, macro-level dynamical-systems modelling has the advantage that the numerical solution of the relevant differential equations is considerably less resource-intensive than the computer simulation of corresponding agent-based models. This fact is exploited in Section \ref{sec:adults} below, when the convergence--divergence game is generalized for more than two interacting populations.

In addition to specifying how payoffs result from contests, an application of EGT needs to specify a dynamic for evolving the population; this encodes assumptions about how strategies replicate in the population and, possibly, how new strategies may emerge. A canonical dynamic is the continuous-time replicator equation \cite{TayJon1978,HofSig1998}, which for two strategies (the case discussed in the present paper) reads
\begin{equation}\label{eq:replicator2}
  \dot x = x(1-x)[f(x) - \widetilde{f}(x)].
\end{equation}
Here, $x$ ($0 \leq x \leq 1$) represents the frequency of one of the strategies in the population, $\dot x$ is its time derivative, $f(x)$ its fitness, and $\widetilde{f}(x)$ the fitness of the competing strategy (the frequency of the competing strategy is, of course, $1-x$). In very general terms, the shape of equation \eqref{eq:replicator2} implies that the usage of a strategy increases (decreases) whenever its fitness is greater (lower) than that of its competitor. The replicator equation can thus serve as a basic formalization of Darwinian selection \cite{VinBro2005}, but crucially for applications in cultural evolution, it can also be interpreted as describing the differential replication of non-biologically transmitted variants \cite{BauRit2017,Yan2017}. As a continuous-time differential equation, it formally describes the evolution of strategies in a hypothetical infinite, fully-mixing population. However, and again crucially from the point of view of applications in cultural evolution, the equation can be derived as the continuous limit of a stochastic imitation dynamic in which players imitate each others' strategies in relation to their fitnesses \cite{San2010}.

It is important to stress that the fitness terms $f(x)$ and $\widetilde{f}(x)$ in equation \eqref{eq:replicator2} are functions of the population state $x$. Hence, the fitness of a strategy is never static (except in mathematically degenerate, trivial games), but rather evolves as the population evolves. This fact guarantees that the dynamics can display a number of interesting behaviours ranging from simple point attractors to cycles and bifurcations \cite{HofSig1998}. But it also highlights the philosophically important observation that in the dynamics of language, too, the evolutionary success of a strategy (such as a particular variant of a sociolinguistic variable) must depend on what other strategies are in use in the population, and at what frequencies. These facts will be discussed in more depth in the following section.

\section{A game of convergence and divergence}\label{sec:game}

A considerable body of evidence suggests that people tend to accommodate their use of language to the linguistic behaviour of those they feel close to or view in positive terms, and to distance their use of language from the behaviour of those who are distant or negatively assessed \cite{DoiEtAl1976,GilOga2007,Bab2010,Par2012}. These findings have been systematized under the umbrella of communication accommodation theory \cite{GilOga2007}, which identifies two principal mechanisms of accommodation: convergence and divergence. Broadly speaking, the goal of linguistically convergent behaviour is to decrease the social distance between the interlocuting parties, whereas divergent behaviour (such as choosing different pronunciations, different registers, different speech rates, etc.)~tends to increase that social distance and to contribute to the upkeep of socially defined identities. The processes of linguistic convergence and divergence may be to some extent automatic \cite{Tru2008} and, even though possibly deeply rooted, may not always be conscious. For example, a study exploring the extent to which speakers of New Zealand English accommodated to a speaker of Australian English who was instructed to either insult or flatter the former \cite{Bab2010} presented nuanced findings. Even though the experimental condition (insult vs.~flattery) did not predict the degree of convergence or divergence (measured in terms of the similarity or dissimilarity of vowel sounds), the results of an implicit association task \cite{GreEtAl1998} showed that those New Zealand speakers who scored high for implicit pro-Australian biases also attested significantly more linguistic convergence to the Australian speaker than those who scored lower for such biases.

A number of possible interpretations of fitness in the context of linguistic variation and change were discussed in Section \ref{sec:EGT}. In most of these cases, fitness is intrinsic to linguistic strategies, in the sense that it is determined by the functional, formal or, broadly speaking, ``purely linguistic'' properties of those strategies. For example, in the game explored in \cite{BauRit2017}, the fitness of a particular stress placement on polysyllables results from the configuration of the rest of the linguistic system, namely, how frequent monosyllables are in the language (which turns out to determine phrase-level rhythmicity when polysyllabic and monosyllabic words combine to form phrases). Clearly, however, the fitness of a variant may also be socially accrued. In particular, the dual processes of linguistic convergence and divergence may be (partly) responsible for fitness: for me, the fitness of a strategy increases the more those in my own social group (my ingroup) use this strategy but also decreases the more those in other groups (outgroups) use it. I shall call this sort of fitness \emph{extrinsic}.

Although previous studies have modelled accommodation to community behaviour by way of various mechanisms of frequency matching \cite{BaxEtal2006,StaEtal2016,Kau2017,Mic2019,BurEtAl2019}, extrinsic fitness has rarely, if ever, been included as an explicit macro-level term in mathematical models of language dynamics. The proposal here is to consider, initially, two population substrata, which could be thought to correspond to two socioeconomic classes, two subgroups of an age cohort, two age cohorts in a single community, or in general any two subpopulations of a speech community whose members are inclined to converge toward the speech patterns of those in their own subpopulation but diverge from the usage of those in the other subpopulation.\footnote{In the real world, processes of convergence and divergence are much more nuanced than this simple set-up would suggest. These complexities will be further discussed in Section \ref{sec:discussion}.} The crucial assumption is that the two groups be unambiguously distinguishable in terms of non-linguistic markers, so that the group membership of most speakers in the community is not in doubt for members of either subcommunity. This can normally be assumed to be the case, as socially salient group distinctions tend to be indicated by various, visible extra-linguistic semiotic practices, such as choice of clothing and fashion \cite{Eck1980}.

Generalizing directly from the discussion in Section \ref{sec:EGT}, let $x_1$ and $x_2$ denote the frequency of use of a strategy in two interacting populations; for symbolic ease, these are often collected in a vector $\mathbf{x} = (x_1, x_2)$ in what follows. Intuitively, the fitness of this strategy in the two populations should depend, firstly, on how frequently the two populations interact. Secondly, it should depend on how strong the tendencies to converge to ingroup behaviour and to diverge from outgroup behaviour are. To formalize these ideas, let
\begin{equation}
  P = \begin{pmatrix}
    p_{11} & p_{12} \\
    p_{21} & p_{22}
  \end{pmatrix}
\end{equation}
be a row-stochastic matrix of interaction rates, such that cell $p_{ij}$ gives the probability of an individual from population $i$ interacting with an individual from population $j$. Similarly, let
\begin{equation}
  A = \begin{pmatrix}
    a_{11} & a_{12} \\
    a_{21} & a_{22}
  \end{pmatrix}
\end{equation}
be a matrix of real numbers. Here, $a_{11}$ and $a_{22}$ supply, for want of a better term, the ``strength'' of convergence to ingroup norm for the two populations, while $a_{12}$ and $a_{21}$ supply the concomitant strengths of divergence from outgroup norm. The matrix $A$ is not required to be row-stochastic, but in view of the intended interpretation, we require its elements to be positive.\footnote{Unless otherwise noted, I assume throughout that the matrices $P$ and $A$ do not contain zeroes. A zero element means that either interaction or convergence/divergence vanishes completely; these degenerate cases are of technical mathematical interest only.}

The fitness of one of two competing strategies in the two populations can now be defined as follows:
\begin{equation}\label{eq:fitnesses-full}
  \left\{
  \begin{aligned}
    f_1(\mathbf{x}) &= p_{11}a_{11}x_1 + p_{12}a_{12}(1-x_2)x_1 \\
    f_2(\mathbf{x}) &= p_{22}a_{22}x_2 + p_{21}a_{21}(1-x_1)x_2
  \end{aligned}
  \right.
\end{equation}
(to be explained in detail shortly). For the fitness of the competing strategy we similarly have
\begin{equation}\label{eq:tilde-fitnesses-full}
  \left\{
    \begin{aligned}
      \widetilde{f}_1(\mathbf{x}) &= p_{11}\widetilde{a}_{11}(1-x_1) + p_{12}\widetilde{a}_{12}x_2(1-x_1) \\
      \widetilde{f}_2(\mathbf{x}) &= p_{22}\widetilde{a}_{22}(1-x_2) + p_{21}\widetilde{a}_{21}x_1(1-x_2)
    \end{aligned}
  \right.
\end{equation}
for another convergence/divergence matrix $\widetilde{A} = [\widetilde{a}_{ij}]$. In each of these equations, the first term on the right hand side accounts for intra-group interactions, the second for inter-group interactions. Thus, from the first term, the fitness of the strategy increases the more intra-group interactions take place, the more important convergence to ingroup norm is judged to be, and the more the strategy is employed in the ingroup. Conversely, from the second term, the fitness of the strategy increases the more inter-group interactions take place, the more important divergence from outgroup norm is judged to be, and the more the strategy is \emph{not} used in the outgroup. The second term also includes the ingroup frequency as a factor. This potentially puzzling choice can be explained as follows: the status of the strategy \emph{qua} identity marker depends on how widespread it is in the ingroup; a strategy can demarcate one group from another only to the extent it is an established norm in the ingroup. The practical consequence of this choice is that fitness is a nonlinear function of the population state $\mathbf{x} = (x_1, x_2)$. Consequently, the model belongs to the class of playing-the-field games, and formal results established for matrix (linear fitness) games do not necessarily apply.

To close the definition of the convergence--divergence game, we need to consider a dynamic for evolving the two populations. Although there is precedent for using the replicator equation \cite[e.g.][]{BauRit2017}, this has the downside that pure replicator dynamics lack a mutation term and thus can account for neither the emergence of strategies not already present in the population, nor for the effects of mistransmission. To include these aspects in the game, I consider the replicator--mutator equation \cite{PagNow2002}. For two strategies in two populations, it reads (see Appendix \ref{app:RM}):
\begin{equation}\label{eq:RM2}
  \left\{
    \begin{aligned}
      \dot x_1 &= (\widetilde{x}_1 - \widetilde{m}_1)x_1 f_1(\mathbf{x}) - (x_1 - m_1)\widetilde{x}_1 \widetilde{f}_1(\mathbf{x}) \\
      \dot x_2 &= (\widetilde{x}_2 - \widetilde{m}_2)x_2 f_2(\mathbf{x}) - (x_2 - m_2)\widetilde{x}_2\widetilde{f}_2(\mathbf{x}),
    \end{aligned}
  \right.
\end{equation}
where I write $\widetilde{x}_i = 1 - x_i$ for convenience, and $\widetilde{m}_i$ ($m_i$) is the rate with which the first (second) strategy mutates into the second (first) in population $i$. The replicator dynamics are recovered as the special case $\widetilde{m}_1 = m_1 = \widetilde{m}_2 = m_2 = 0$. The mutation rates can also usefully be collected in matrices:
\begin{equation}
  M_1 = \begin{pmatrix}
    1 - \widetilde{m}_1 & \widetilde{m}_1 \\
    m_1 & 1 - m_1
  \end{pmatrix}
  \quad\textnormal{and}\quad
  M_2 = \begin{pmatrix}
    1 - \widetilde{m}_2 & \widetilde{m}_2 \\
    m_2 & 1 - m_2
  \end{pmatrix}.
\end{equation}
We note that the phase space of the system is the unit square $[0,1]^2 = \{(x_1, x_2) \in \mathbf{R}^2 : 0 \leq x_1,x_2 \leq 1\}$.

The inclusion of the mutation rate parameters in equation \eqref{eq:RM2} is intended to model the fact that linguistic ``mutations''---misarticulations, misperceptions, misparsings, external borrowings, proper innovations, etc.---occur in language continuously, even though much of the time they fall below some threshold and fail to actuate a change \cite{Net1999,WeiLabHer1968}. From a conceptual point of view, it is important to note that while mutations must eventually show up in linguistic production, their distal causes may be perceptual, and nothing about equation \eqref{eq:RM2} precludes mutations from arising from the listener \cite[cf.][]{Oha1981}. The population-level effects of misproduction and misparsing (or failure to apply reconstructive rules, as in \cite{Oha1981}) may very well be identical, and the simple model here studied makes no ontological distinction between the two sources of mutation.

Before proceeding to a detailed study of the properties of this game, I point out that notation can be further simplified if the interaction and convergence/divergence matrices are folded together. Writing $S$ for the elementwise product of $P$ and $A$, and $\widetilde{S}$ for that of $P$ and $\widetilde{A}$ (i.e.~$s_{ij} = p_{ij}a_{ij}$ and $\widetilde{s}_{ij} = p_{ij}\widetilde{a}_{ij}$), and taking advantage of the convention $\widetilde{x}_i = 1-x_i$, we have
\begin{equation}\label{eq:fitnesses}
  \left\{
    \begin{aligned}
      f_1(\mathbf{x}) &= s_{11} x_1 + s_{12} \widetilde{x}_2 x_1 \\
      f_2(\mathbf{x}) &= s_{22} x_2 + s_{21} \widetilde{x}_1 x_2 \\
      \widetilde{f}_1(\mathbf{x}) &= \widetilde{s}_{11} \widetilde{x}_1 + \widetilde{s}_{12} x_2 \widetilde{x}_1 \\
      \widetilde{f}_2(\mathbf{x}) &= \widetilde{s}_{22} \widetilde{x}_2 + \widetilde{s}_{21} x_1 \widetilde{x}_2
    \end{aligned}
  \right.
\end{equation}
for the fitnesses. In what follows, I will refer to $S$ and $\widetilde{S}$ as \emph{social alignment} matrices and to their elements as social alignment parameters.

Although not much can be said about the behaviour of this system without imposing further restrictions on the social alignment and mutation rate parameters, the following general result will be useful later.\footnote{Proofs of all formal results are collected in Appendix \ref{app:proofs}.}
\begin{prop}\label{prop:bezout}
  The number of equilibria of the two-strategy, two-population convergence--divergence game is at least one and at most nine.
\end{prop}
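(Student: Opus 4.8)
The plan is to read off the two equilibrium conditions as a system of bivariate polynomial equations and bound its solutions from both sides: the upper bound by an algebraic (B\'ezout-type) count of the common zeros of the two nullclines, and the lower bound by a topological argument based on the direction of the flow on the boundary of the phase square. Writing $g_1(\mathbf{x}) = \dot x_1$ and $g_2(\mathbf{x}) = \dot x_2$ as given by \eqref{eq:RM2} with the fitnesses \eqref{eq:fitnesses} inserted, an equilibrium is exactly a point of $[0,1]^2$ at which $g_1 = g_2 = 0$.

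For the upper bound I would first record the degrees. Using $f_1 = x_1(s_{11}+s_{12}\widetilde{x}_2)$ and $\widetilde{f}_1 = \widetilde{x}_1(\widetilde{s}_{11}+\widetilde{s}_{12}x_2)$, one sees that $g_1$ has degree three in $x_1$ and degree one in $x_2$, and symmetrically $g_2$ has degree three in $x_2$ and degree one in $x_1$. Since $g_1$ is linear in $x_2$, the nullcline $g_1 = 0$ can be solved rationally as $x_2 = -A(x_1)/B(x_1)$ with $A,B$ cubic in $x_1$; substituting into $g_2 = 0$ and clearing the denominator $B(x_1)^3$ produces a single univariate polynomial in $x_1$ whose degree controls the number of common zeros. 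Equivalently, one views $g_1 = 0$ and $g_2 = 0$ as the two (cubic) nullcline curves and invokes B\'ezout's theorem. Each common zero lying in $[0,1]^2$ is an equilibrium, so the total count is bounded by this B\'ezout number.

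The hard part is not the invocation of B\'ezout but pinning down the constant and guaranteeing finiteness. Finiteness demands that $g_1$ and $g_2$ have no common polynomial factor, for otherwise the nullclines would share a component and the system would have a continuum of equilibria; this is precisely where the standing assumption that $P$, $A$ and $\widetilde{A}$ carry no zero entries is used, and one must check that a putative common factor is incompatible with the positivity of the social-alignment parameters. The second and more delicate point is the exact degree: each nullcline is cubic in its ``own'' variable but also carries a top cross term (a multiple of $x_1^3 x_2$ for $g_1$ and of $x_1 x_2^3$ for $g_2$), so the naive total degrees are four and a crude count would give $4\times 4 = 16$. The refinement to the stated nine requires the sharp bidegree (multihomogeneous) bookkeeping together with the discarding of the spurious roots introduced when clearing $B(x_1)^3$ (those at which the substituted $x_2$ runs off to infinity). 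Controlling these top-degree contributions is the main obstacle and where the genuine computation lies.

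For the lower bound I would argue topologically, assuming positive mutation rates. Evaluating $g_1$ on the vertical edges gives $g_1(0,x_2) = m_1(\widetilde{s}_{11}+\widetilde{s}_{12}x_2) > 0$ and $g_1(1,x_2) = -\widetilde{m}_1(s_{11}+s_{12}\widetilde{x}_2) < 0$, and symmetrically $g_2(x_1,0) = m_2(\widetilde{s}_{22}+\widetilde{s}_{21}x_1) > 0$ and $g_2(x_1,1) = -\widetilde{m}_2(s_{22}+s_{21}\widetilde{x}_1) < 0$ on the horizontal edges. Hence the vector field points strictly into the interior along the entire boundary of $[0,1]^2$, so the square is forward-invariant; applying Brouwer's theorem to the time-$t$ flow map and letting $t \to 0$ (or, equivalently, the Poincar\'e--Hopf theorem, whose index sum equals the Euler characteristic $1$ of the square) yields at least one equilibrium, necessarily in the open interior. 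In the degenerate case of vanishing mutation rates the pure-strategy vertices themselves become equilibria, so the existence claim survives there as well, completing both halves of the statement.
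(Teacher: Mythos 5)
Your existence argument is sound and is actually more explicit than the paper's, which disposes of the lower bound in one sentence by citing Brouwer; your boundary computations (e.g.\ $\dot x_1\rvert_{x_1=0}=m_1(\widetilde{s}_{11}+\widetilde{s}_{12}x_2)>0$ and $\dot x_1\rvert_{x_1=1}=-\widetilde{m}_1(s_{11}+s_{12}\widetilde{x}_2)<0$) are correct, they make $[0,1]^2$ forward-invariant, and the Brouwer/Poincar\'e--Hopf step then goes through, with the vertices covering the zero-mutation case.

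The upper bound is where your proposal stops short of a proof, and you say so yourself. You correctly observe that after inserting \eqref{eq:fitnesses} into \eqref{eq:RM2} each $\dot x_i$ has bidegree $(3,1)$ resp.\ $(1,3)$ and carries a total-degree-four monomial (the coefficient of $x_1^3x_2$ in $\dot x_1$ is $s_{12}-\widetilde{s}_{12}$, that of $x_1x_2^3$ in $\dot x_2$ is $s_{21}-\widetilde{s}_{21}$), so the crude plane B\'ezout count is $4\times4=16$; but you then defer ``the refinement to the stated nine'' to an unperformed multihomogeneous computation that you yourself call the main obstacle. Carrying it out does not deliver nine: the multihomogeneous B\'ezout number for bidegrees $(3,1)$ and $(1,3)$ is $3\cdot 3+1\cdot 1=10$, and your own elimination route (substituting $x_2=-A(x_1)/B(x_1)$ and clearing $B^3$) produces a univariate polynomial of degree ten whose leading coefficient is a generically nonzero multiple of $s_{21}-\widetilde{s}_{21}$, with no automatically spurious root to discard. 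So as written your argument bounds the count by ten, not nine. The paper's own proof instead asserts that each nullcline is a genuine cubic curve and applies plane B\'ezout to get $3\times3=9$; that assertion is exactly the statement that the quartic monomials you flagged cancel, i.e.\ $s_{12}=\widetilde{s}_{12}$ and $s_{21}=\widetilde{s}_{21}$ (which holds under the strategy symmetry $S=\widetilde{S}$ used everywhere else in the paper, but not for the fully general game). To reach nine you must either establish that cancellation or supply an argument forcing an intersection at infinity; neither is in your write-up. Your further point that B\'ezout requires the two nullclines to share no common component is legitimate and is glossed over by the paper, but you likewise leave the positivity check that would rule this out undone. In short: the ``at least one'' half is complete, the ``at most nine'' half is not.
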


\section{Symmetry}\label{sec:symmetry}

Rigorous mathematical analysis of replicator--mutator dynamics is challenging, even when fitness is linear \cite{KomEtal2001,Kom2004,Mit2003,DuoHan2020}. In the nonlinear case here considered, progress can be made if rather strong symmetry conditions are imposed on the social alignment and mutation matrices, thereby reducing the number of free parameters of the system. Given these assumptions about symmetry, it turns out that it is possible to find the game's dynamic equilibria and their stabilities analytically, as well as to delineate how these depend on the model parameters. This allows for a complete specification of the possible behaviours of the system, which will aid us in comparing the model's predictions with empirical data in the sections to follow.

There is some reason to think that the model, as above defined, is in fact too general. Thus consider the fact that we have two mutation matrices $M_1$ and $M_2$ instead of one: this allows for the possibility that the mutation rates of strategies differ from one population to the next. Insofar as linguistic mutations arise from universal principles of (mis)perception, (mis)articulation or (mis)computation \cite{Oha1981,Oha1983}, this generality may be superfluous, or in other words, we may just as well assume $M_1 = M_2$ without much worry. On the other hand, the inclusion of four free parameters in each of the social alignment matrices $S$ and $\widetilde{S}$ may seem superfluous. In particular, this allows for the possibility that the effects of convergence and divergence are different in the two populations (this occurs if $s_{11} \neq s_{22}$ and/or $s_{12} \neq s_{21}$ in the case of the first strategy, and similarly with the tilde'ed parameters for the second strategy, i.e.~if $S$ and $\widetilde{S}$ are not bisymmetric matrices). But in the absence of evidence for such between-population differences, we may assume the populations to be equivalent in this respect and thus impose the restriction that the matrices $S$ and $\widetilde{S}$ be bisymmetric. If it is the case that both (1) $M_1 = M_2$ and (2) $S$ and $\widetilde{S}$ are bisymmetric, then the two populations are interchangeable from the point of view of each strategy, and I will call this configuration \emph{population symmetry} in what follows.

Conversely, we may imagine a situation in which $m_1 = \widetilde{m}_1$ and $m_2 = \widetilde{m}_2$, i.e.~the matrices $M_1$ and $M_2$ are bisymmetric, so that mutations between the two strategies are equally likely in each direction. If, moreover, $S = \widetilde{S}$, each strategy responds to convergence and divergence identically within each population. Whenever (1) $M_1$ and $M_2$ are bisymmetric and (2) $S = \widetilde{S}$, I shall say that the configuration exhibits \emph{strategy symmetry}. Finally, whenever a system exhibits both population symmetry and strategy symmetry---that is to say, whenever (1) $M := M_1 = M_2$, (2) $S = \widetilde{S}$ and (3) $M$ and $S$ are bisymmetric matrices---I shall say it exhibits \emph{full symmetry}.

Full symmetry leads to a significant reduction in the number of free parameters of the game, thereby facilitating its analysis. Since $S = \widetilde{S}$ and the matrix is assumed bisymmetric, the fitnesses (\ref{eq:fitnesses}) now read
\begin{equation}\label{eq:fitnesses2}
  \left\{
    \begin{aligned}
      f_1(\mathbf{x}) &= s_{11}x_1 + s_{12} x_1 \widetilde{x_2} \\
      f_2(\mathbf{x}) &= s_{11}x_2 + s_{12} x_2 \widetilde{x_1} \\
      \widetilde{f}_1(\mathbf{x}) &= s_{11} \widetilde{x}_1 + s_{12} \widetilde{x}_1 x_2 \\
      \widetilde{f}_2(\mathbf{x}) &= s_{11} \widetilde{x}_2 + s_{12} \widetilde{x}_2 x_1.
    \end{aligned}
  \right.
\end{equation}
Assuming $s_{11} \neq 0$ and writing $\sigma = s_{12}/s_{11}$, \eqref{eq:fitnesses2} can be further reduced to the following representation:
\begin{equation}\label{eq:fitnesses3}
  \left\{
    \begin{aligned}
      f_1(\mathbf{x}) &= x_1 + \sigma x_1 \widetilde{x}_2 \\
      f_2(\mathbf{x}) &= x_2 + \sigma x_2 \widetilde{x}_1 \\
      \widetilde{f}_1(\mathbf{x}) &= \widetilde{x}_1 + \sigma \widetilde{x}_1 x_2 \\
      \widetilde{f}_2(\mathbf{x}) &= \widetilde{x}_2 + \sigma \widetilde{x}_2 x_1.
    \end{aligned}
  \right.
\end{equation}
The equivalence of (\ref{eq:fitnesses2}) and (\ref{eq:fitnesses3}) is guaranteed by the following result on the topological equivalence of the two dynamical systems:
\begin{prop}\label{prop:topological-equivalence}
  The games defined by \eqref{eq:fitnesses2} and \eqref{eq:fitnesses3} share the same orbits, and in particular the same equilibria, under the dynamic \eqref{eq:RM2}.
\end{prop}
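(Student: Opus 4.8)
The plan is to show that the two systems' vector fields are positive scalar multiples of one another, and then to invoke the standard fact that such a relationship is merely a constant rescaling of time, which leaves orbits (hence equilibria) unchanged. To set this up, I would first record that $s_{11} = p_{11}a_{11} > 0$, since the entries of $P$ and $A$ are assumed positive; this is exactly what makes $\sigma = s_{12}/s_{11}$ and the division below legitimate. Writing $g_1, g_2, \widetilde{g}_1, \widetilde{g}_2$ for the fitnesses appearing in \eqref{eq:fitnesses3} and $f_1, f_2, \widetilde{f}_1, \widetilde{f}_2$ for those in \eqref{eq:fitnesses2}, a term-by-term comparison shows $g_i = f_i/s_{11}$ and $\widetilde{g}_i = \widetilde{f}_i/s_{11}$ for $i \in \{1,2\}$.

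The key step is then to substitute these into the dynamic \eqref{eq:RM2} and observe that each fitness function occurs there only as a single multiplicative factor of its own term: $f_i$ multiplies $(\widetilde{x}_i - \widetilde{m}_i)x_i$ and $\widetilde{f}_i$ multiplies $(x_i - m_i)\widetilde{x}_i$, and no other fitness enters the equation for $\dot x_i$. Dividing every fitness by the common constant $s_{11}$ therefore divides the entire right-hand side of each equation by $s_{11}$. Hence, writing $\mathbf{F}$ and $\mathbf{G}$ for the vector fields generated by \eqref{eq:fitnesses2} and \eqref{eq:fitnesses3} respectively, we obtain $\mathbf{G} = s_{11}^{-1}\mathbf{F}$ throughout the phase space $[0,1]^2$.

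To conclude, I would run the elementary orbital-equivalence argument: if $\mathbf{x}(t)$ solves $\dot{\mathbf{x}} = \mathbf{F}(\mathbf{x})$, then $t \mapsto \mathbf{x}(t/s_{11})$ solves $\dot{\mathbf{y}} = \mathbf{G}(\mathbf{y})$, so the two flows trace exactly the same curves in $[0,1]^2$, differing only in their speed of traversal. In particular the equilibrium sets coincide, since $\mathbf{F}(\mathbf{x}^\ast) = \mathbf{0}$ if and only if $\mathbf{G}(\mathbf{x}^\ast) = \mathbf{0}$, and the Jacobian of $\mathbf{G}$ is $s_{11}^{-1}$ times that of $\mathbf{F}$, so the eigenvalues at any equilibrium agree up to the positive factor $s_{11}^{-1}$ and the stability classification carries over as well. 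There is no genuine obstacle here beyond the structural observation that fitness enters \eqref{eq:RM2} linearly, as a bare multiplicative factor; the one point deserving care is that the scaling constant is \emph{positive}, since it is positivity---not mere nonvanishing---that preserves the direction of time, and hence the stability types, and not only the unoriented orbits.
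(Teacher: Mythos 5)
Your proposal is correct and follows essentially the same route as the paper: the paper proves a slightly more general statement (Proposition \ref{prop:equivalence-general}, rescaling the diagonal of an $N\times N$ alignment matrix to unity) by exactly your observation that each fitness enters \eqref{eq:RM2} as a bare multiplicative factor, so dividing all fitnesses by $s_{11}$ multiplies the vector field by the constant $1/s_{11}$ and amounts to a rescaling of time. Your explicit remarks that $s_{11}>0$ (so the time rescaling preserves orientation and hence stability types) make the argument, if anything, slightly more careful than the paper's.
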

\noindent Writing $\mu = m_1 = \widetilde{m}_1 = m_2 = \widetilde{m}_2$ for the common mutation parameter, the study of the fully symmetric convergence--divergence game thus reduces to a study of the system
\begin{equation}\label{eq:RMFS}
  \left\{
    \begin{aligned}
      \dot x_1 &= (\widetilde{x}_1 - \mu) x_1 f_1(\mathbf{x}) - (x_1 - \mu) \widetilde{x}_1 \widetilde{f}_1(\mathbf{x}) \\
      \dot x_2 &= (\widetilde{x}_2 - \mu) x_2 f_2(\mathbf{x}) - (x_2 - \mu) \widetilde{x}_2 \widetilde{f}_2(\mathbf{x})
    \end{aligned}
  \right.
\end{equation}
with fitnesses (\ref{eq:fitnesses3}).

By Proposition \ref{prop:bezout}, this game has between one and nine equilibria; in other words, the number of possible long-term outcomes of the population dynamics must fall within these bounds. The mutation rate parameter $\mu$ turns out to play a decisive role in determining the number of equilibria and their stabilities. Using linearization \cite{MedLin2001}, it is possible to establish that this parameter has three critical values, $\mu_1$, $\mu_2$ and $\mu_3$, which separate four qualitatively distinct phases and satisfy the following relationship for any $\sigma > 0$:
\begin{equation}
  0 < \mu_1 = \frac{\sqrt{2\sigma^2 + 4\sigma + 4} - \sigma - 2}{\sigma^2} < \mu_2 = \frac{1}{\sigma + 4} < \mu_3 = \frac{\sigma + 1}{3\sigma + 4} < 1.
\end{equation}
As $\mu$ passes any of these critical boundaries, one or more equilibria undergo a bifurcation, and the system's phase space is reconfigured. As demonstrated in Appendix \ref{app:proofs}, this behaviour is summarized by the following statements:
\begin{enumerate}
  \item[I.] If $\mu > \mu_3$, the game has one equilibrium, the uniform state $(1/2, 1/2)$. It is stable (a sink, i.e.~stable in both dimensions).
  \item[II.] If $\mu_2 < \mu < \mu_3$, the game has three equilibria: the uniform state $(1/2, 1/2)$, which is unstable (a saddle, i.e.~unstable in one dimension), and two internal equilibria satisfying $x_1 = \widetilde{x}_2$ (and hence also $\widetilde{x}_1 = x_2$). These are stable (sinks).
  \item[III.] If $\mu_1 < \mu < \mu_2$, the game has five equilibria: the above-mentioned three rest points as well as two further equilibria satisfying $x_1 = x_2$. The uniform state is unstable (a source, i.e.~unstable in both dimensions), the equilibria satisfying $x_1 = \widetilde{x}_2$ are stable (sinks) and the equilibria satisfying $x_1 = x_2$ are unstable (saddles).
  \item[IV.] If $\mu < \mu_1$, the game has nine equilibria: the above-mentioned five equilibria, as well as four further rest points. The uniform state is unstable (a source), the diagonal equilibria (i.e.~those satisfying either $x_1 = \widetilde{x}_2$ or $x_1=x_2$) are stable (sinks), and the four new equilibria are unstable (saddles).
\end{enumerate}
Figure \ref{fig:bifu} illustrates the bifurcation sequence.

\begin{figure}
  \hspace{0.3cm}\includegraphics[width=0.9\textwidth]{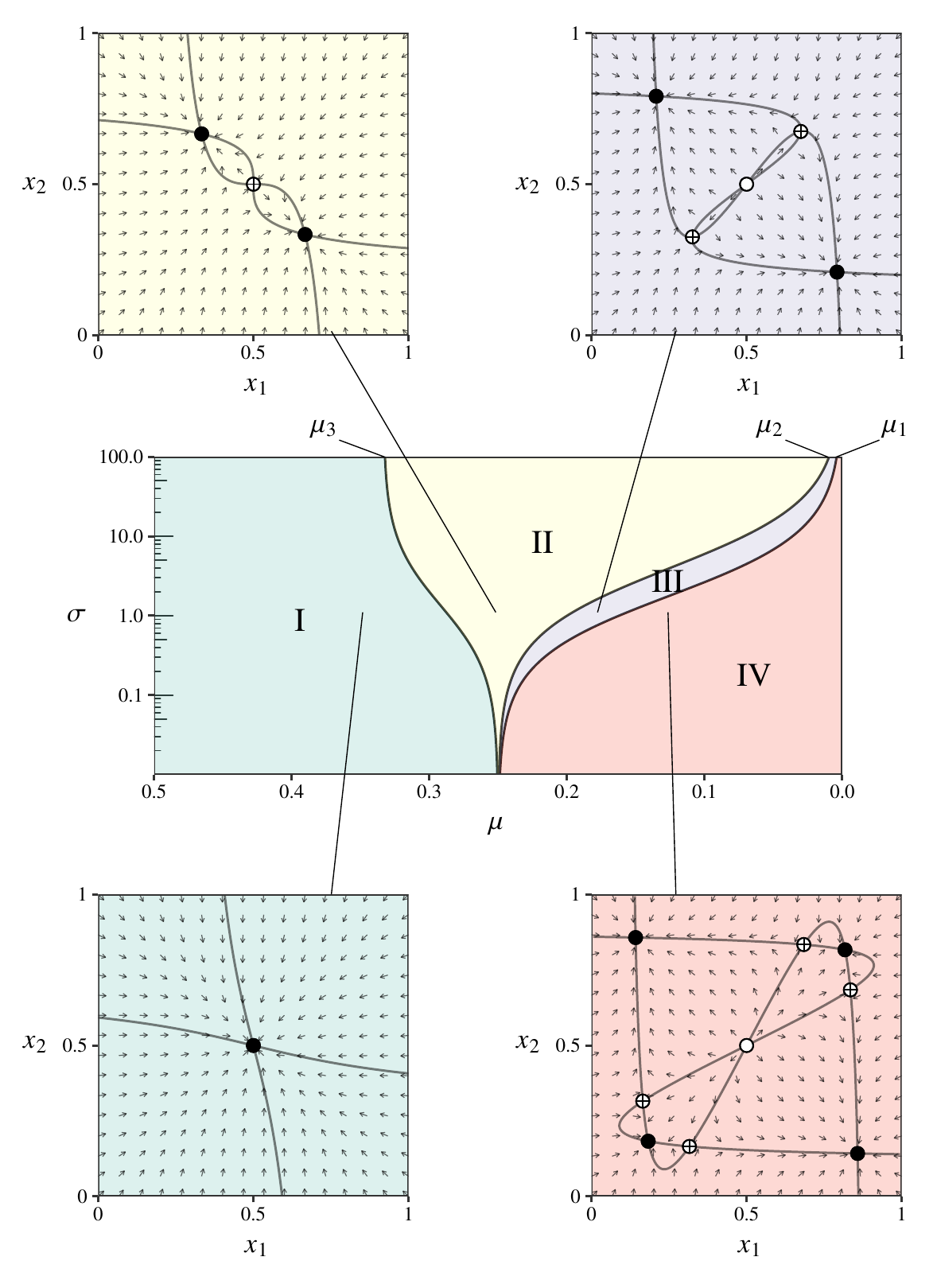}
  \caption{Bifurcation diagram of the fully symmetric two-strategy, two-population convergence--divergence game. The parameter space is divided into four phases, illustrated in the surrounding phase space plots for $\sigma = 1$ and the following values of $\mu$: $0.35$, $0.25$, $0.18$, $0.13$ (filled circles: sinks; crossed circles: saddles; open circles: sources). The curves indicate nullclines, i.e.~the sets where either $\dot x_1 = 0$ or $\dot x_2 = 0$.}\label{fig:bifu}
\end{figure}

In particular, evolutionarily stable states of the fully symmetric convergence--divergence game with two strategies and two populations can assume one of three configurations. For high mutation rates (phase I), only the uniform state $(1/2, 1/2)$, in which each population uses each variant at equal frequencies, is stable. For slightly more moderate mutation rates (phase II), two non-uniform states are stable. Each of these satisfies $x_1 = \widetilde{x}_2$ (and hence also $\widetilde{x}_1 = x_2$), and could be termed a ``stable divergent state'', since population 1 employs strategy 1 with the frequency with which population 2 employs strategy 2. In other words, in a stable divergent state, the two populations tend to adopt phenotypes that separate the populations to the greatest extent possible given the value of the mutation rate $\mu$ (higher mutation rates tending to lead to overall greater mixing). Phase III serves as a transition to phase IV, the phase of very low mutation rates, in which two more stable states appear. These satisfy $x_1 = x_2$ (and hence also $\widetilde{x}_1 = \widetilde{x}_2$), and could be termed ``stable convergent states'', as the two populations converge on the same frequencies with respect to both strategies. However, in phase IV the stable divergent states continue to be available, too.

To summarize, the qualitative behaviour of the two-strategy two-population convergence--divergence game can be characterized completely, if the effects of both social alignment and mutation are symmetric. The game has either one, two or four stable equilibria depending on the values of the social alignment and mutation rate parameters $\sigma$ and $\mu$. If real-life speech communities play such a game, and play it for long enough, they are then predicted to fall in one of these theoretically established stable rest points (modulo stochastic effects not explicitly modelled by the deterministic game). It is to this question of empirical adequacy that we next turn.

\section{Jock and burnout identity dynamics}\label{sec:jocks-burnouts}

Convergence, divergence and identity dynamics are exhibited particularly clearly in adolescent sociolinguistic behaviour. Adolescence marks a sharp qualitative transition from childhood, where linguistic role models mostly come from the immediate family, to a new kind of social order dominated by peer-to-peer interactions \cite{Eck1988}. A split typically occurs at this point whereby adolescents overwhelmingly identify with one of two social groups, often called ``jocks'' and ``burnouts'' in the North American context \cite{Eck1980}. The social life of the prototypical jock is centred around school activities, while the prototypical burnout displays an aversion to school. These contrasting behaviours have been argued to represent different social adaptations to different needs, expectations and opportunities, considering that jocks are overwhelmingly destined for middle-class and burnouts for working-class trajectories \cite{Eck1988}. Importantly, this social stratification coexists with a corresponding stratification in language use, with burnouts attesting innovative linguistic variants at higher frequencies than jocks. In other words, the two social groups exhibit socially motivated linguistic divergence.

In a detailed sociophonetic study, Eckert \cite{Eck1988} explored the realization of the STRUT vowel /\textipa{2}/ (found in words such as \emph{cup}, \emph{but} and \emph{strut}) in jock and burnout groups in a Michigan high school in the 1980s. This vowel was undergoing backing (retraction in phonetic space) at the time as part of the wider sound change known as the Northern Cities Shift \cite{LabEtal1972}, with backed variants of the vowel considered innovative. As suggested above, burnouts were then expected to show higher rates of /\textipa{2}/ backing than jocks, given their tendency to eschew conventional norms.\footnote{Those conventional norms are, of course, largely set by adults. This point will be revisited in detail in Section \ref{sec:adults}.} This is exactly what was found: Eckert \cite[p.~201]{Eck1988} reports the probability of backing of /\textipa{2}/ in burnouts at $0.59$, while in jocks the figure is only $0.43$; a difference that was found statistically significant. This profile, $\mathbf{e} = (0.43, 0.59)$ (for Eckert), is satisfyingly close to the state $\mathbf{e}_m = (0.42, 0.58)$ (for ``Eckert modelled''), which by the results in Section \ref{sec:symmetry} is a stable equilibrium of the fully symmetric convergence--divergence game for some combination of the social alignment and mutation rate parameters $\sigma$ and $\mu$. In the terminology introduced in Section \ref{sec:symmetry}, this is a stable divergent state.

One reason for choosing Eckert's \cite{Eck1988} study as our empirical test case is that here it is possible to estimate the value of the mutation rate parameter $\mu$ using independent empirical evidence. Table \ref{tab:confusion} displays a subset of the phonetic confusion matrices for American listeners presented in \cite{CutEtAl2004}. Of interest is the total probability of confusing [\textipa{2}] with any of [\textipa{A}], [\textipa{O}] and [\textipa{U}], which are the vowels Eckert \cite{Eck1988} considers backed variants of /\textipa{2}/. From these data, we can compute the probability of the mutation \textipa{2} $\to$ $\{$\textipa{A}, \textipa{O}, \textipa{U}$\}$ as follows, treating initial and final occurrences of the vowels as equally likely:
\begin{equation}
  0.5 \cdot (0.125 + 0.083 + 0.012) + 0.5 \cdot (0.114 + 0.111 + 0.09) \approx 0.23.
\end{equation}
In other words, experimental data on the confusability of vowels in American English by native listeners suggests setting $\mu = 0.23$ as a realistic value of the mutation rate parameter.

\begin{table}
  \centering
  \caption{Confusion probabilities (as percentages) for four vowels of American English in native listeners, in initial (VC) and final (CV) positions, pooled across participants and across consonant contexts. From \cite[Tables III--IV]{CutEtAl2004}.}\label{tab:confusion}
  \small
  \begin{tabular}{rrrrrrrrrrr}
    \hline
    & \multicolumn{4}{c}{\textbf{Response (VC position)}} & & \multicolumn{4}{c}{\textbf{Response (CV position)}} \\
    \cline{2-5} \cline{7-10}
    \textbf{Stimulus} & \textipa{2} & \textipa{A} & \textipa{O} & \textipa{U} & & \textipa{2} & \textipa{A} & \textipa{O} & \textipa{U} \\
    \hline
    \textipa{2} & 64.9 & 12.5 & 8.3 & 1.2 & & 65.3 & 11.4 & 11.1 & 0.9 \\
    \textipa{A} & 12.5 & 42.3 & 26.8 & 0.0 & & 24.4 & 33.5 & 27.0 & 0.3 \\
    \textipa{O} & 4.5 & 36.3 & 47.3 & 1.2 & & 3.7 & 23.9 & 65.3 & 0.6 \\
    \textipa{U} & 14.0 & 2.1 & 2.1 & 63.7 & & 21.6 & 2.0 & 0.6 & 68.2 \\
    \hline
  \end{tabular}
\end{table}

In Section \ref{sec:symmetry} it was established that stable divergent states exist in each of the phases II--IV of the fully symmetric convergence--divergence game, but not in phase I. Furthermore, in phases II and III these kinds of states are the system's only stable equilibria. Revisiting Fig.~\ref{fig:bifu}, we find that the empirically estimated value of the mutation rate, $\mu = 0.23$, never cuts into phase I of the game, for any value of the social alignment parameter $\sigma$. Moreover, for sufficiently high values of $\sigma$, the system necessarily lands in either phase II or phase III given this value of $\mu$, and hence stable divergence is absolutely predicted.

Since the backing of /\textipa{2}/ was part of an ongoing sound change, is it not misguided to model its empirical profile in jocks and burnouts as a stable equilibrium of the formal game? While this may seem genuinely problematic at first sight, closer examination of the timescales involved helps to dissolve the worry. Even the most conservative estimates of the duration of the Northern Cities Shift measure it at the scale of speaker generations, and there is evidence that the earliest stages of the chain shift were already in operation in around 1900 \cite{McCar2010}. Viewed against this backdrop of population-level diachronic change, processes of acquisition and use at the level of individuals are short-term, and the population itself appears to be, for all intents and purposes, at equilibrium.\footnote{This position accrues further support from the empirical observation that people are rarely, if ever, aware of ongoing large-scale rotations of the vowel space, such as the Northern Cities Shift, even if they participate in them \cite{EckLab2017}.} In fact, it is natural to view the effects of the Northern Cities Shift on /\textipa{2}/ not as a particular historical trajectory of the relevant speech community from some historical initial condition toward a point attractor, but rather as the movement of that point attractor itself in the system's phase space. This movement, in turn, can be construed as resulting from a change in the relevant mutation rate parameters, as I will argue in the immediately following section.

The results are thus encouraging: an empirically attested frequency profile corresponds to a stable equilibrium of the model for an independently plausible estimate of the relevant mutation rate parameter. Two residual problems remain, however. Firstly, the nontrivial equilibria of the fully symmetric game, i.e.~equilibria other than the uniform state $(1/2, 1/2)$, always appear in pairs. In phases II and III, for instance, not only one but two stable divergent states are available. The question then arises whether it is possible to explain to which one of these equilibria the empirical system converges, and why. Secondly, the validity of some of the strict symmetry assumptions employed above may be legitimately questioned. These problems will be addressed in the next section.

\section{Adults and asymmetries}\label{sec:adults}

The above results illustrate, on an abstract level, an amount of agreement between a rather simple mathematical model and an empirically attested sociolinguistic pattern. The analysis so far is limited, however, in at least two respects. First, it is incontestable on empirical grounds that burnouts lead jocks in the backing of /\textipa{2}/; not only is this what the data show, it is also expected given what is known about the social dynamics in this particular case \cite{Eck1988}. As explained above, however, the model predicts two stable equilibria, one in which burnouts do lead, but also another in which the frequency profile is exactly reversed, with jocks showing the most backing. Furthermore, for sufficiently low mutation rates, stable states in which the two populations agree in their variant use are available. In an ideal situation, the model would predict not just the existence of these stable states, but also to which equilibrium the real-life system is expected to converge.

An empirically reasonable hypothesis is that burnouts' greater innovativeness may result from their greater desire to diverge from a third social group, namely, that of adults. If adults, in turn, should be conservative and attest only a limited degree of /\textipa{2}/ backing, then the greater opposition that obtains between burnouts and adults (as opposed to the more amicable relationship between jocks and adults; see \cite{Eck1988} for extended discussion) would lead us to expect that burnouts should, indeed, show a greater degree of innovativeness, and hence of /\textipa{2}/ backing, than jocks. Although quantitative data on the frequency of /\textipa{2}/ backing in the adult population is unfortunately not available, the assumption that adults represent the most conservative group is a reasonable position to hold in view of the general observation that adolescents tend to lead in the use of innovative linguistic forms \cite{Wag2012}.

From a modelling point of view, the question then arises how to generalize the notion of extrinsic fitness---which above has been defined with two populations in mind---for multiple populations. Proceeding with the intuition that all social groups other than one's own ought to be treated as outgroups, and hence something to diverge from, we can generalize \eqref{eq:fitnesses} for $N$ populations as follows:
\begin{equation}\label{eq:fitnessesN}
  \left\{
    \begin{aligned}
      f_i(\mathbf{x}) &= s_{ii}x_i + \sum_{\substack{j=1\\j\neq i}}^N s_{ij} \widetilde{x}_j x_i \\
      \widetilde{f}_i(\mathbf{x}) &= \widetilde{s}_{ii} \widetilde{x}_i + \sum_{\substack{j=1\\j\neq i}}^N \widetilde{s}_{ij} x_j \widetilde{x}_i
    \end{aligned}
  \right.\quad\quad(\mathbf{x} \in [0,1]^N;\ i=1, \dots ,N)
\end{equation}
where the interpretation of the social alignment matrices $S = [s_{ij}]$ and $\widetilde{S} = [\widetilde{s}_{ij}]$ (now of size $N\times N$) remains as before. For $N = 3$ groups in particular, one has
\begin{equation}
  \left\{
    \begin{aligned}
    f_1(\mathbf{x}) &= s_{11}x_1 + s_{12}\widetilde{x}_2x_1 + s_{13}\widetilde{x}_3x_1 \\
    f_2(\mathbf{x}) &= s_{22}x_2 + s_{21}\widetilde{x}_1x_2 + s_{23}\widetilde{x}_3x_2 \\
    f_3(\mathbf{x}) &= s_{33}x_3 + s_{31}\widetilde{x}_1x_3 + s_{32}\widetilde{x}_2x_3
    \end{aligned}
  \right.
\end{equation}
for the fitnesses $f_i$ (and \emph{mutatis mutandis} for the complementary fitnesses $\widetilde{f}_i$).

In the specific case of jocks, burnouts and adults, we also need to ask what shape the social alignment matrices should assume. With no evidence to the contrary, I will continue to assume $S = \widetilde{S}$, i.e.~that the strength of convergence/divergence does not depend on the identity of the linguistic variant. Writing $x_1$ for the frequency of /\textipa{2}/ backing in jocks, $x_2$ for that in burnouts, and $x_3$ for that in adults, a reasonable form for the matrix $S$ is then
\begin{equation}\label{eq:S3}
  S = \begin{pmatrix}
    1 & \sigma & \tau_1 \\
    \sigma & 1 & \tau_2 \\
    \upsilon & \upsilon & 1
  \end{pmatrix},
\end{equation}
where the diagonal is set to unity with the help of Proposition \ref{prop:equivalence-general} (Appendix \ref{app:proofs}). To interpret this, note that $\sigma$ remains to denote the strength of divergence between jocks and burnouts (assumed symmetric for simplicity). The new $\tau_1$ and $\tau_2$ parameters supply the strength of divergence of the jock and burnout populations from the adult population, respectively, while $\upsilon$ gives the tendency of adults to diverge from the adolescent populations (assumed equal for both adolescent populations, given no evidence to the contrary). The empirically interesting scenario concerns the subset of the parameter space where $\tau_1 < \tau_2$.\footnote{For simplicity, I gloss over the effects of potential asymmetries and differences in the interaction rates within and among the different populations, assuming interaction rates to be equal. The elements of $S$ then directly encode the strength of convergence/divergence. Although some differences in interaction rates are to be expected, the particular social setting in this case (high school) ensures that all groups interact to a considerable extent. Moreover, since quantitative estimates of the strength of convergence/divergence parameters are not available, it is always possible to adjust these parameters to account for differences in interaction rates in the elements of the aggregate matrix $S$. More discussion of these complications follows in Section \ref{sec:discussion}.}

Returning now to the second problem identified above, we may ask whether it is reasonable to assume strict strategy symmetry to hold in this particular empirical situation. The mutation rate parameters in the replicator--mutator equation \eqref{eq:RM2} were intended to model all kinds of sources of linguistic mutation not attributable to the (social) dynamics of convergence and divergence. But is it reasonable to assume that, setting those social motivations aside, the probability of backing /\textipa{2}/ equals the probability of fronting its backed variants? As was mentioned in Section \ref{sec:jocks-burnouts}, the particular variable here tracked took part in a chain shift whose various sub-processes stand in a complex relation of dependencies, as is well-known \cite{LabEtal1972}. In other words, there was a system-internal pressure for /\textipa{2}/ to retract in phonetic space, and this fact ought to be reflected in our model as a higher rate for the mutation \textipa{2} $\to$ $\{$\textipa{A}, \textipa{O}, \textipa{U}$\}$ compared to the corresponding back-mutation $\{$\textipa{A}, \textipa{O}, \textipa{U}$\}$ $\to$ \textipa{2} (moreover, this system-internal pressure ought to increase over time as the chain shift progresses). Recall that the empirical frequency profile was $\mathbf{e} = (0.43, 0.59)$, i.e.~the probability of /\textipa{2}/ backing in jocks was $0.43$ and in burnouts $0.59$. While this is close to the prediction derived from the fully symmetric model, $\mathbf{e}_m = (0.42, 0.58)$, the difference is precisely in the expected direction: the empirical populations back /\textipa{2}/ more than the model populations.

Perhaps even more importantly, however, there turns out to be good reasons for thinking that the reverse mutation $\{$\textipa{A}, \textipa{O}, \textipa{U}$\}$ $\to$ \textipa{2} is less likely than the mutation \textipa{2} $\to$ $\{$\textipa{A}, \textipa{O}, \textipa{U}$\}$ even in the absence of system-internal pressures such as the contribution of interacting changes in a chain shift. From Table \ref{tab:confusion}, we can compute the probability of the fronting mutation (again assuming initial and final occurrences of the vowels to be equally likely) to be only
\begin{equation}
  0.5 \cdot \frac{0.125 + 0.045 + 0.140}{3} + 0.5 \cdot \frac{0.244 + 0.037 + 0.216}{3} \approx 0.13.
\end{equation}
Although an overall decrease in the mutation rate is not predicted to take the system away from the phases in which the relevant stable equilibria are attested (cf.~Fig.~\ref{fig:bifu}), it is not intuitively clear how an asymmetry like this will affect the system's overall phase portrait.

Extending the convergence--divergence game with an additional population and relaxing the strict strategy symmetry assumption leads to a dynamical system which is potentially better able to account for the empirical facts, but also one that is analytically intractable. A partial understanding of its behaviour can, however, be gained using numerical solution methods. To this end, I next assume that $\upsilon$, the adults-to-adolescents social alignment parameter (see again the matrix in \eqref{eq:S3}), can take one of the values $\upsilon = 1/5$, $\upsilon = 1$ and $\upsilon = 5$, and that the adolescents-to-adults social alignment parameters $\tau_1$ and $\tau_2$ satisfy $\tau_1 = k^{-1}\sigma$ and $\tau_2 = k\sigma$ for some $k > 0$. For $k > 1$, burnouts then tend to diverge more from the adults than jocks do, corresponding to the case $\tau_1 < \tau_2$ which above was identified as the empirically pertinent scenario. To explore systematically the dependence of the dynamics on the values of these parameters, I vary $\sigma$ and $k$ in equally spaced steps from $1/5$ to $5$. The mutation parameters are set at $m = 0.23$ for the backing mutation \textipa{2} $\to$ $\{$\textipa{A}, \textipa{O}, \textipa{U}$\}$ and at $\widetilde{m} = 0.13$ for the fronting mutation $\{$\textipa{A}, \textipa{O}, \textipa{U}$\}$ $\to$ \textipa{2} in each of the three populations. Table \ref{tab:parameters} summarizes the interpretation of all model parameters, as well as the values used to obtain the numerical solutions.

\begin{table}
  \centering
  \caption{Interpretation of model parameters, together with the values used to obtain numerical solutions.}\label{tab:parameters}
  \begin{tabular}{lll}
    \hline
    \textbf{Parameter} & \textbf{Meaning} & \textbf{Value(s)} \\
    \hline
    $k$ & Skewing factor & $1/5 \ldots 5$ \\
    $\sigma$ & Adolescents-to-adolescents social alignment & $1/5 \ldots 5$ \\
    $\tau_1$ & Jocks-to-adults social alignment & $k^{-1}\sigma$ \\
    $\tau_2$ & Burnouts-to-adults social alignment & $k\sigma$ \\
    $\upsilon$ & Adults-to-adolescents social alignment & $1/5, 1, 5$ \\
    $m$ & Probability of backing mutation \textipa{2} $\to$ $\{$\textipa{A}, \textipa{O}, \textipa{U}$\}$ & $0.23$ \\
    $\widetilde{m}$ & Probability of fronting mutation $\{$\textipa{A}, \textipa{O}, \textipa{U}$\}$ $\to$ \textipa{2} & $0.13$ \\
    \hline
  \end{tabular}
\end{table}

For each parameter combination, the system was set in a random initial state $\mathbf{x}(0) = (x_{1}(0), {x_2}(0), {x_3}(0))$ where the frequencies of /\textipa{2}/ backing in the jock and burnout populations (${x_1}(0)$ and ${x_2}(0)$, respectively) were drawn uniformly at random from the interval $[0,1]$, while the corresponding frequency in the adult population, ${x_3}(0)$, was drawn uniformly from the restricted interval $[0, 1/4]$. This corresponds to the assumption that, in the historically attested initial condition, adults were conservative and attested lower frequencies of the innovation, while placing no restrictions on the potential starting points of the adolescent populations. The differential equations were then solved using the Dormand--Prince method \cite{DorPri1980} for 100 such randomly drawn initial states (for each individual parameter combination) until a stable equilibrium was reached.

The question of interest concerns what regions of the parameter space support not only the existence of a stable divergent state with the property $x_1 < 1/2 < x_2$ (burnouts attesting more backing of /\textipa{2}/ than jocks), but also the \emph{necessary} convergence of the metapopulation into such a stable state. In other words: what proportion of the 100 solutions converged to a state satisfying $x_1 < 1/2 < x_2$ for any given combination of $\sigma$ and $k$? Figure \ref{fig:sweep} answers this question. Specifically, we find that if both $\sigma$ and $k$ have values roughly larger than $3$, most or all instances of the system converge to the desired stable state. The adults-to-adolescents divergence parameter $\upsilon$ plays only a minor role, although it can be said that of the three values examined, $\upsilon = 1$ guarantees the widest convergence to the desired stable state. We can thus summarize these results as follows: if, and only if, jocks tend to diverge from adults less than they diverge from burnouts, and burnouts tend to diverge from adults more than they diverge from jocks, the system is guaranteed to converge to a stable divergent state in which burnouts attest more use of the innovative variant than jocks. This is fully in line with the empirical expectation.

\begin{figure}[t]
  \centering
  \includegraphics[width=1.0\textwidth]{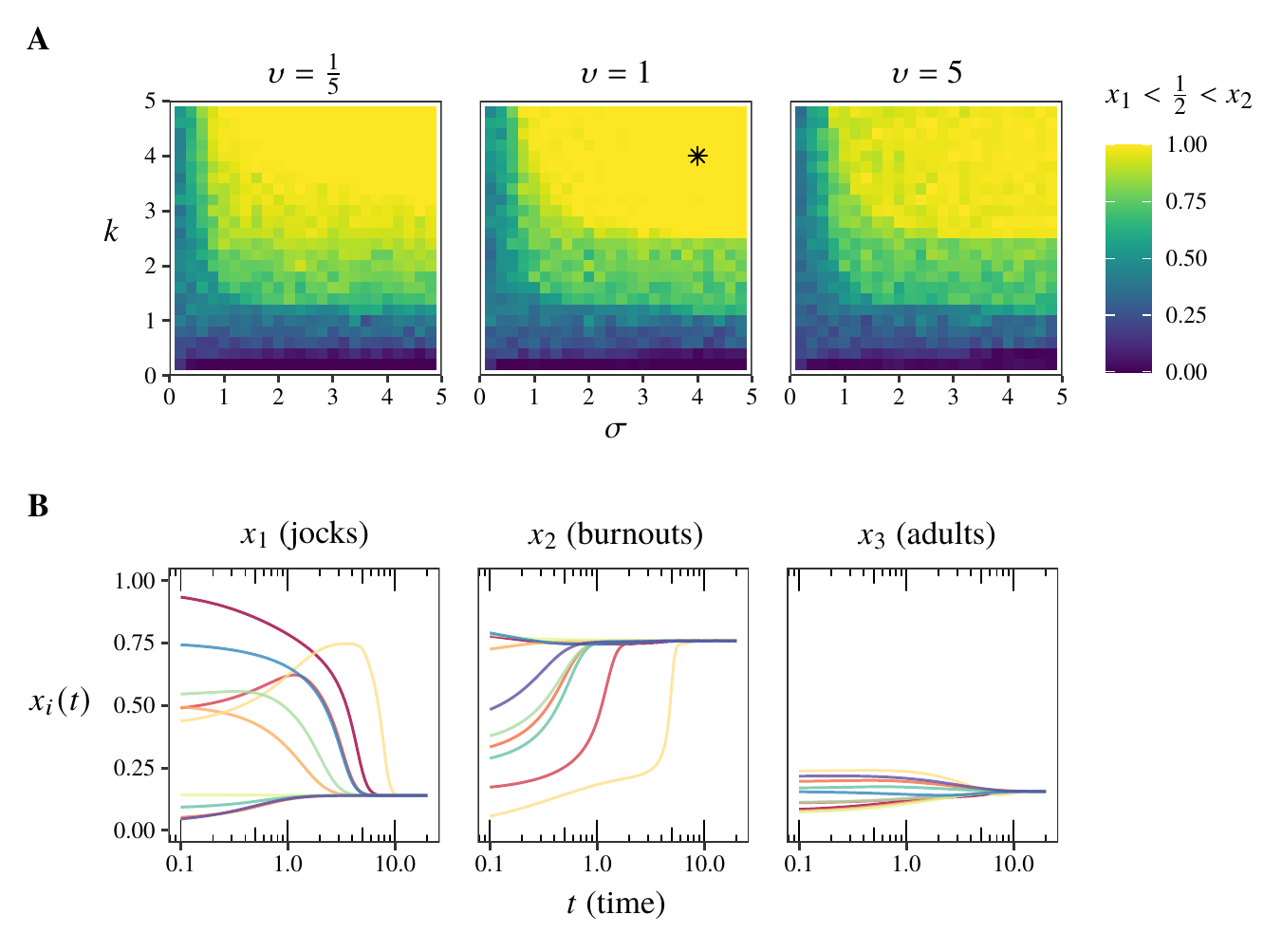}
  \caption{\textbf{(A)} Proportion of numerical solutions converging on a stable divergent state in which burnouts show more use of the innovative variant than jocks ($x_1 < 1/2 < x_2$), for various choices of the social alignment parameters (cf.~Table \ref{tab:parameters}). \textbf{(B)} Ten randomly selected solutions from the ensemble of solutions with parameters indicated by the asterisk in (A).}\label{fig:sweep}
\end{figure}

\section{Discussion}\label{sec:discussion}

I have suggested that classical, macrosociolinguistic theory can usefully be complemented by a simple mathematical modelling paradigm in which structured but well-mixing populations interact and the fate of linguistic strategies depends, in part at least, on these social dynamics (``extrinsic fitness''). In particular, convergence to ingroup and divergence from outgroup linguistic behaviour can be modelled as an evolutionary game played by two or more populations, in which the fitness of a strategy is a nonlinear function of its usage frequencies in the populations on the one hand, and parameters controlling the strength of interaction and convergence/divergence within and between populations on the other (Sections \ref{sec:EGT}--\ref{sec:game}). Section \ref{sec:symmetry} demonstrated how the behaviour of this model can be fully analysed if strong symmetry assumptions are made. Application of the model to empirical data showed qualitative agreement (Section \ref{sec:jocks-burnouts}), in the sense that the empirically attested sociolinguistic pattern was found to correspond to a stable equilibrium of the model for an independently estimated value of the relevant mutation rate parameter. Section \ref{sec:adults} additionally illustrated how the model's predictions can be sharpened through numerical exploration of an asymmetric multi-population extension of the game. In this concluding section, I discuss some of the remaining challenges, as well as prospects for future work along the above established lines.

Even in its simplest, symmetric two-population instantiation, application of the convergence--divergence game to empirical data requires the estimation of two parameters, a mutation rate parameter $\mu$ and a social alignment parameter $\sigma$. In the specific application here, it was possible to calibrate $\mu$, but data on $\sigma$ are lacking. It is important to be clear about what manner of corroboration may be expected in this sort of empirical application scenario, however. Given the very idealized character of the model, the expectation is not that it should predict the exact numbers of an empirical data point, but rather the general \emph{form} of the data point, and hopefully to do this robustly over a range of possible model parameter values, so that the prediction does not depend on \emph{ad hoc} values of the parameters. Thus, what from the point of view of modelling is intriguing about Eckert's \cite{Eck1988} empirical frequency profile $\mathbf{e} = (0.43, 0.59)$ is not so much the fact that it comprises these particular frequencies, but that it is what above I have called a divergent state, i.e.~it satisfies $e_1 < 1/2 < e_2$. The crucial question then is whether this nature of the empirical data point is predicted by the model (it is), as well as whether the prediction remains robust over a range of the model's parameter space (it does). It would in principle be possible to obtain estimates of a quantity such as $\sigma$, as an existing empirical literature on social biases demonstrates \cite{BalEtAl2014}. The practical challenges of conducting an experiment in which (1) the frequencies of use of a variable of interest in a number of populations, (2) relevant parameters of the linguistic dynamics, such as the mutation rate, and (3) social alignment parameters such as ingroup and outgroup biases can all be measured at the same time, are however considerable.

It is also possible to consider stricter, quantitative forms of empirical evaluation and to try to triangulate the value of the social alignment parameter $\sigma$. From equation \eqref{eq:displacement} in Appendix \ref{app:proofs} it is possible to deduce, in the fully symmetric special case, an envelope of values for $\sigma$ in which a given on-diagonal equilibrium $\mathbf{x}^* = (x_1^*, x_2^*)$ exists for any given pair of mutation rate $\mu$ and displacement $\delta = \delta_{\widetilde{D}}$, the latter defined as $\delta = |0.5 - x_1^*|$ (equivalently, $\delta = |0.5 - x_2^*|$). In the above case study of /\textipa{2}/ backing, where $\mathbf{x}^* = \mathbf{e}_m = (0.42, 0.58)$ and consequently $\delta = 0.08$, it turns out that the modelled equilibrium in fact falls somewhat outside this envelope, assuming the above-estimated value $\mu = 0.23$ for the mutation rate parameter (Figure \ref{fig:displacement}). The empirical displacement does fall, however, within the envelope for $\mu = 0.25$. This predicts a value of $\sigma$ on the order of $10^{-1}$, implying that convergence to ingroup norm is roughly an order of magnitude stronger than divergence from outgroup norm in the specific case of jocks and burnouts (again assuming uniform interaction rates within and between the two groups). This accords well, in principle, with the results of empirical studies of intra- and inter-group cooperation which have found ingroup favouritism to result from a pro-ingroup rather than an anti-outgroup bias \cite{BalEtAl2014}. It is important to bear in mind, however, the uncertainty inherent in empirical data points such as the frequency profile $\mathbf{e}_m$ or the estimate of the mutation rate $\mu$. For one, Eckert's \cite{Eck1988} data include a considerable amount of inter-individual variability within each population; this, together with the small size of the populations, implies that some amount of noise in $\mathbf{e}_m$ (and, consequently, in the displacement $\delta$) is to be expected. Similarly, the phoneme confusion probabilities reported in \cite{CutEtAl2004} are averages over participants and presentations of vowel tokens, and so involve some unknown amount of uncertainty. In connection with further empirical testing, it would also be important to consider other sociolinguistic variables (such as those reported in \cite{Eck2000} for jocks and burnouts), but also other speech communities so as not to overfit the model to the specifics of one particular sociolinguistic situation.

\begin{figure}[t]
  \centering
  \includegraphics[width=0.8\textwidth]{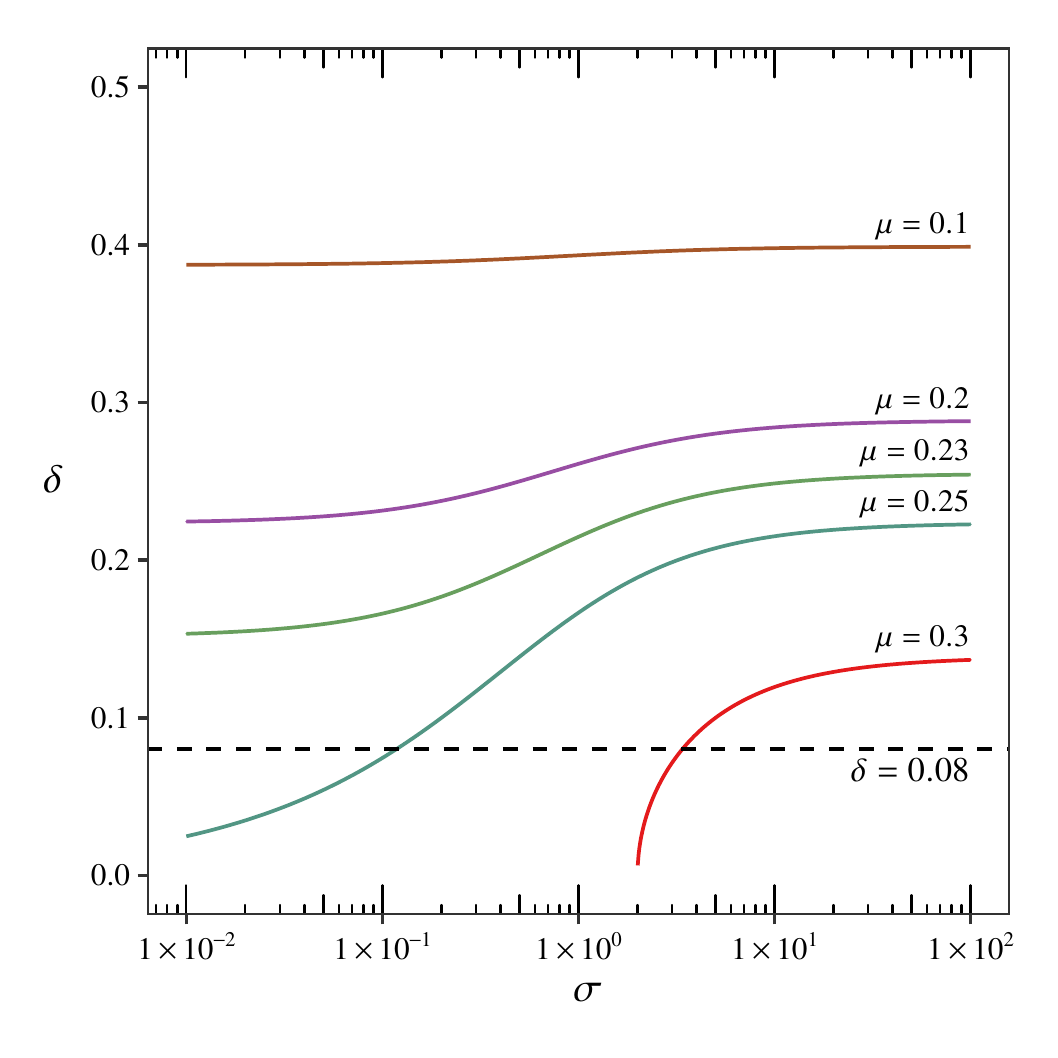}
  \caption{Displacement $\delta$ of the stable divergent states from the uniform state $(x_1, x_2) = (0.5, 0.5)$ in the fully symmetric game, for various values of mutation rate $\mu$ and social alignment $\sigma$. Each value of $\mu$ defines an envelope of possible locations for the equilibria. The dashed line gives the empirical displacement of the Eckert state $\mathbf{e}_m = (0.42, 0.58)$.}\label{fig:displacement}
\end{figure}

Apart from the above empirical considerations, it would be possible to explore various formal extensions and refinements of the modelling approach adopted in this paper. One of these concerns the resolution at which populations are approached---for technical reasons, I have here assumed infinite, homogeneous well-mixing populations. As noted above, Eckert's \cite{Eck1988} data, in fact, show an amount of inter-individual variability within each social group; the assumption of homogeneous populations makes modelling this aspect of the data impossible. Established techniques exist, however, for studying evolutionary games in finite and inhomogeneous populations \cite{HofSig2003}; within-population variability may also be modelled using stochastic dynamics instead of the deterministic setting here assumed \cite{FudHar1992}. Conceptually straightforward ways also exist of generalizing the convergence--divergence game for multiple strategies, so that modelling the dynamics of more than two linguistic variants in a number of populations becomes possible. Other promising avenues for future work include analytical treatments of near-symmetric variations of the game, i.e.~models in which one aspect of the symmetry (such as mutation symmetry) is relaxed but others kept in place, and an exploration of possible ways of interpreting the population-level results here reported from the vantage point of individual-level models, perhaps along the lines of \cite{Mic2020}.

In this paper, I have suggested that the identity dynamics of linguistic strategies in stratified but interacting populations of speakers can be illuminated by a formal model of linguistic convergence and divergence, couched in evolutionary game theory. In reality, processes of convergence and divergence attest a number of nuanced characteristics which the simple model here studied is unable to account for. Where the simple model assumes that an aggregate population-level frequency (such as the frequency of backing the vowel /\textipa{2}/) is a useful proxy of individual-level linguistic behaviour, the real-life situation is far more complex than this. Notably, linguistic accommodation operates on multiple (time)scales and is responsible not just for large-scale regularities such as the population-level opposition observed between jock and burnout language use, but also for the ways in which people temporarily adjust and attune their linguistic choices, as demonstrated by the phenomenon of style shifting \cite{GilOga2007}. The model here proposed could naturally be extended to cater for these complexities for instance via the inclusion of socially motivated production biases \cite{KauWal2018}. Although analytical tractability is almost certainly lost, pursuing such extensions in future work is likely to lead to additional insights into the complex relationship that obtains between the dynamics of linguistic variants and the social dynamics of populations.

\section*{Data accessibility}

This article has no additional data. Numerical solutions of the asymmetric 3-population game in Section \ref{sec:adults} were obtained using the Dormand--Prince method \cite{DorPri1980} implemented in Julia \cite{Julia}, version 1.4.2. The Jacobian of the fully symmetric game, and its eigenvalues, were symbolically calculated using the Maxima Computer Algebra System. All code is available at \url{https://doi.org/10.5281/zenodo.4034829}.

\section*{Competing interests}

I declare I have no competing interests.

\section*{Funding}

The research here reported was funded by the Federal Ministry of Education and Research (BMBF) and the Baden-Württemberg Ministry of Science as part of the Excellence Strategy of the German Federal and State Governments.

\section*{Acknowledgements}

I am grateful to Ricardo Bermúdez-Otero, George Walkden and the referees for comments.

\appendix

\section{Appendix}

\subsection{The replicator--mutator equation}\label{app:RM}

The $n$-strategy one-population continuous-time replicator--mutator equation \cite{PagNow2002} reads\footnote{Note that here, in contrast to the rest of this paper, the subscripts index strategies instead of populations.}
\begin{equation}\label{eq:RMn}
  \dot x_i = \sum_{j=1}^n x_j m_{ji} f_j(\mathbf{x}) - x_i \phi(\mathbf{x})
\end{equation}
for the $i$th strategy, where $M = [m_{ij}]$ is a row-stochastic $n\times n$ matrix supplying the mutation rates from one strategy to another, and $\phi(\mathbf{x})$ is average fitness, defined as
\begin{equation}
  \phi(\mathbf{x}) = \sum_{i=1}^n x_i f_i(\mathbf{x}).
\end{equation}
For two strategies ($n=2$), the mutation matrix has only two free parameters:
\begin{equation}
  M = \begin{pmatrix}
    m_{11} & m_{12} \\
    m_{21} & m_{22}
  \end{pmatrix}
  = \begin{pmatrix}
    1-m_{12} & m_{12} \\
    m_{21} & 1-m_{21}
  \end{pmatrix}.
\end{equation}
Writing $x = x_1$, $\widetilde{x} = x_2$, $f = f_1$ and $\widetilde{f} = f_2$ as above, and noting that the average fitness equals $\phi(\mathbf{x}) = xf(\mathbf{x}) + \widetilde{x} \widetilde{f}(\mathbf{x})$, equation \eqref{eq:RMn} becomes
\begin{equation}
  \begin{split}
    \dot x &= x m_{11} f(\mathbf{x}) + \widetilde{x} m_{21} \widetilde{f} (\mathbf{x}) - x[xf(\mathbf{x}) + \widetilde{x} \widetilde{f}(\mathbf{x})] \\
    &= (1 - m_{12} - x)xf(\mathbf{x}) + (m_{21} - x)\widetilde{x}\widetilde{f}(\mathbf{x}) \\
    &= (1-x - m_{12}) xf(\mathbf{x}) - (x - m_{21}) \widetilde{x}\widetilde{f}(\mathbf{x}) \\
    &= (\widetilde{x} - m_{12}) xf(\mathbf{x}) - (x - m_{21}) \widetilde{x}\widetilde{f}(\mathbf{x}).
  \end{split}
\end{equation}
Writing $\widetilde{m}$ for $m_{12}$ and $m$ for $m_{21}$ recovers the form used in the main body of the paper.

\subsection{Proofs}\label{app:proofs}

This appendix provides proofs of Propositions \ref{prop:bezout} and \ref{prop:topological-equivalence}. Proposition \ref{prop:bezout} is reproduced below as Proposition \ref{prop:bezout-again}; Proposition \ref{prop:topological-equivalence} follows from the more general result stated as Proposition \ref{prop:equivalence-general} below. Finally, the bifurcation sequence of the fully symmetric game is stated and proved as Proposition \ref{prop:fullsym}.

\begin{prop}\label{prop:bezout-again}
  The number of equilibria of the two-strategy, two-population convergence--divergence game is at least one and at most nine.
\end{prop}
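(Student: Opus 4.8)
The plan is to identify the equilibria with the common zeros in the unit square of the two polynomials obtained by setting the right-hand sides of \eqref{eq:RM2} to zero, and to treat the lower and upper bounds by quite different means. For the lower bound I would first check that $[0,1]^2$ is forward-invariant. Inspecting \eqref{eq:RM2} on the four faces suffices: at $x_1 = 0$ one has $\dot x_1 = m_1\widetilde f_1(\mathbf{x}) \ge 0$, at $x_1 = 1$ one has $\dot x_1 = -\widetilde m_1 f_1(\mathbf{x}) \le 0$, and symmetrically in $x_2$, so no trajectory leaves the square. Since $[0,1]^2$ is compact and convex and is carried into itself by every time-$t$ map of the flow, Brouwer's fixed-point theorem (applied along a sequence $t_n \to 0$, with a compactness argument to extract an actual rest point) yields at least one equilibrium.

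For the upper bound I would exploit the low degree of the defining equations in a structured sense. Expanding the first line of \eqref{eq:RM2} with the fitnesses \eqref{eq:fitnesses} shows that $\dot x_1$ is cubic in $x_1$ and affine in $x_2$, so $\dot x_1 = A(x_1) + B(x_1)\,x_2$ with $A, B$ cubic; symmetrically $\dot x_2 = C(x_2) + D(x_2)\,x_1$. Solving the first equation for $x_2 = -A(x_1)/B(x_1)$, substituting into the second, and clearing the denominator $B(x_1)^3$ produces one univariate polynomial in $x_1$ whose roots (paired with the corresponding $x_2$) are exactly the equilibria; the positivity of the entries of $S$ and $\widetilde S$ should guarantee that this resultant is not identically zero, ruling out a common component of the two nullclines and hence infinitely many solutions. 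Equivalently, one reads the bound off Bézout's theorem for the two nullclines regarded as curves in $\mathbf{P}^1\times\mathbf{P}^1$.

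The hard part will be pinning the bound at \emph{nine} rather than the value ten that the degree count suggests on its own: the bidegrees are $(3,1)$ and $(1,3)$, giving the multihomogeneous B\'ezout number $3\cdot 3 + 1\cdot 1 = 10$. To close this gap I would bring in a parity argument. Because the vector field points strictly inward on $\partial[0,1]^2$ (by the face computations above, with $m_i,\widetilde m_i>0$), the Poincar\'e--Hopf index theorem forces the indices of the interior equilibria to sum to the Euler characteristic of the square, namely one. When the equilibria are hyperbolic each contributes index $\pm 1$, so their number is necessarily odd; together with the algebraic bound of ten this leaves at most nine.

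The remaining delicate point is handling the degenerate parameter values, where equilibria may coalesce or migrate onto the boundary and the clean hyperbolicity hypothesis fails. Here I would argue by perturbation, using the fact that such collisions only lower the count, so that nine remains an upper bound throughout the parameter space. Verifying the inward-pointing condition uniformly on the boundary and controlling these non-generic configurations is the step I expect to require the most care; the algebraic bound and the existence statement are comparatively routine once the structural form $\dot x_1 = A(x_1)+B(x_1)x_2$ is in hand.
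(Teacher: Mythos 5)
Your lower-bound argument (forward invariance of the square checked on the faces, then Brouwer) is in substance the paper's, which simply cites Brouwer's fixed point theorem without spelling out the invariance check. The upper bound is where you and the paper genuinely part ways. The paper's proof is two sentences long: it asserts that $\dot x_1$ and $\dot x_2$ are ``cubic in $x_1$ and $x_2$'', treats each nullcline as a cubic plane curve, and reads off $3\cdot 3=9$ from plane B\'ezout. Your bookkeeping is the more accurate one: $\dot x_1$ has bidegree $(3,1)$ and $\dot x_2$ bidegree $(1,3)$, and as plane curves the nullclines actually have total degree $4$ in general, since the coefficient of $x_1^3x_2$ in $\dot x_1$ works out to $s_{12}-\widetilde s_{12}$, which vanishes only under a strategy-symmetry assumption that Proposition \ref{prop:bezout-again} does not make. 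So plane B\'ezout gives $16$, the bihomogeneous count gives $10$, and the step from $10$ to $9$ that you identify as the hard part is a real issue which the paper's own proof does not confront at all --- it reaches $9$ only via the inaccurate claim that the nullclines are cubics. In that sense your proposal is responding to a gap in the published argument rather than creating one.

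That said, your route from $10$ to $9$ is not yet a proof. The Poincar\'e--Hopf parity argument needs the vector field to point strictly inward on $\partial[0,1]^2$, which requires $m_i,\widetilde m_i>0$; but the proposition is stated for the general game, and the paper's own proof of Proposition \ref{prop:fullsym} shows that at $\mu=0$ the nine equilibria include the four vertices and four edge points of the square, so the boundary configuration is not a removable technicality and the index-sum argument simply does not apply there. The parity step also presupposes hyperbolicity (a non-hyperbolic equilibrium can carry index $0$ or $\pm 2$), and your proposed repair --- that ``collisions only lower the count'' under perturbation --- is not automatic: an isolated equilibrium of index zero can vanish under perturbation, so bounding the degenerate count by the generic one requires a separate conservation-of-multiplicity argument (each isolated solution contributes at least one to the bihomogeneous intersection number) rather than an appeal to the perturbed phase portrait. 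As it stands, then, your plan establishes $10$ rigorously and $9$ only for generic interior, hyperbolic configurations; the degenerate and boundary cases still need a direct treatment of the kind the paper carries out explicitly for $\mu=0$.
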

\begin{proof}
  The existence of at least one rest point follows from Brouwer's fixed point theorem.

  For the upper bound, plugging the fitnesses \eqref{eq:fitnesses} into \eqref{eq:RM2} shows that $\dot x_1$ and $\dot x_2$ are cubic in $x_1$ and $x_2$. In particular, each nullcline is a cubic curve in the $x_1x_2$ plane. By Bézout's theorem, the number of nullcline intersections, and hence of equilibria, is then at most nine.
\end{proof}

\begin{prop}\label{prop:equivalence-general}
  Let $S = \widetilde{S}$ be an $N\times N$ social alignment matrix without zeroes. Then the elements of $S$ can always be rescaled so that the diagonal of $S$ is $1$, without affecting the orbits of the resulting dynamical system under fitnesses \eqref{eq:fitnessesN} and the dynamic \eqref{eq:RM2}.
\end{prop}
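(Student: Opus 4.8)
The plan is to reduce everything to a single algebraic factorization and then invoke the invariance of orbits under positive rescaling of the velocity field. First I would write out the $i$th component of the dynamic \eqref{eq:RM2} with the $N$-population fitnesses \eqref{eq:fitnessesN} inserted, namely $\dot x_i = (\widetilde{x}_i - \widetilde{m}_i)x_i f_i(\mathbf{x}) - (x_i - m_i)\widetilde{x}_i \widetilde{f}_i(\mathbf{x})$. Because $S = \widetilde{S}$, the coefficient $s_{ii}$ appears as the diagonal weight in both $f_i$ and $\widetilde{f}_i$, while the off-diagonal weights are the $s_{ij}$. Factoring $x_i$ out of $f_i$ and $\widetilde{x}_i$ out of $\widetilde{f}_i$, and then pulling the common factor $s_{ii}$ out of both brackets, the equation collapses to
\[
  \dot x_i = s_{ii}\, g_i(\mathbf{x}),
\]
where $g_i$ involves the entries of $S$ only through the ratios $\sigma_{ij} := s_{ij}/s_{ii}$, and where $\sigma_{ii} = 1$ by construction.

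Next I would make the rescaling explicit. Since $S$ has no zero entries, the matrix $S' := [\sigma_{ij}]$ is well defined, is again zero-free, and has all diagonal entries equal to $1$. Feeding $S'$ back into \eqref{eq:fitnessesN} and \eqref{eq:RM2} reproduces precisely the functions $g_i$ as the right-hand sides, so the velocity field $V'$ of the normalized game and the field $V$ of the original game satisfy, component by component, $V_i = s_{ii}\,V'_i$ with every $s_{ii} > 0$. Thus normalizing the diagonal amounts exactly to rescaling $V$ by the positive factors $s_{ii}$, and no generality is lost by passing from $S$ to $S'$.

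The principal step --- and the one I expect to be the main obstacle --- is to conclude from $V_i = s_{ii} V'_i$ that $V$ and $V'$ determine the same orbits on $[0,1]^N$ for every zero-free $S$. I would argue this as a positive reparametrization of the flow: the two fields differ only by strictly positive rescaling factors, so their integral curves sweep out the same oriented point sets, merely at different speeds; in particular they share every rest point, every invariant set, and the entire phase portrait. The delicate part is to carry this reparametrization argument through cleanly --- exhibiting, along each trajectory, the monotone change of time variable that identifies an orbit of $V'$ with an orbit of $V$ --- rather than merely matching the two sets of equilibria. Once that is in place, the orbits of the two games coincide, and in particular so do their equilibria, which is exactly the assertion of the proposition.
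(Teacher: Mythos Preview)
Your approach is the same as the paper's: factor $s_{ii}$ out of the $i$th component of the dynamic and declare the resulting componentwise positive rescaling to be a reparametrization of time. The paper is no more detailed on this point than you are; it writes $H_i = \alpha h_i$ with $\alpha = 1/s_{ii}$ and asserts that this ``amounts to nothing but a rescaling of time by a factor of~$\alpha$.''

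You are right, however, to single out the reparametrization step as the obstacle, and your proposed resolution does not go through when the diagonal entries of $S$ are unequal. A monotone change of time variable carries orbits of $V'$ to orbits of $V$ only when $V = \rho(\mathbf{x})\,V'$ for a \emph{single} positive scalar function $\rho$; your relation $V_i = s_{ii}\,V'_i$ is a componentwise rescaling by possibly distinct positive constants, and that does not preserve orbits in general (compare $(\dot x,\dot y) = (-y,x)$, whose orbits are circles, with $(\dot x,\dot y) = (-2y,x)$, whose orbits are ellipses). So the ``monotone change of time variable'' you hope to exhibit along each trajectory simply does not exist unless all $s_{ii}$ coincide. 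What does survive unconditionally is the equality of equilibrium sets, since each $s_{ii} > 0$; and in the fully symmetric two-population setting of Proposition~\ref{prop:topological-equivalence}, bisymmetry of $S$ forces $s_{11} = s_{22}$, so there the rescaling is genuinely uniform and both your argument and the paper's are complete.
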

\begin{proof}
  Assume fitnesses $f_i$ and $\widetilde{f}_i$ with $S = \widetilde{S}$, and let $F_i(\mathbf{x}) = \alpha f_i(\mathbf{x})$ and $\widetilde{F}_i(\mathbf{x}) = \alpha \widetilde{f}_i(\mathbf{x})$ where $\alpha = 1/s_{ii}$. Consider the differential equations
  \[
    \dot x_i = h_i(\mathbf{x})
    \quad\textnormal{and}\quad
    \dot x_i = H_i(\mathbf{x}),
  \]
  where
  \[
    h_i(\mathbf{x}) = (\widetilde{x}_i - \widetilde{m}_i) x_i f_i(\mathbf{x}) - (x_i - m_i) \widetilde{x}_i \widetilde{f}_i(\mathbf{x})
  \]
  and
  \[
    H_i(\mathbf{x}) = (\widetilde{x}_i - \widetilde{m}_i) x_i F_i(\mathbf{x}) - (x_i - m_i) \widetilde{x}_i \widetilde{F}_i(\mathbf{x}).
  \]
  Then
  \[
    H_i(\mathbf{x}) = \alpha [(\widetilde{x}_i - \widetilde{m}_i)x_if_i(\mathbf{x}) - (x_i - m_i) \widetilde{x}_i \widetilde{f}_i(\mathbf{x})] = \alpha h_i(\mathbf{x}),
  \]
  i.e.~use of $H_i$ instead of $h_i$ amounts to nothing but a rescaling of time by a factor of $\alpha$.
\end{proof}

To prove Proposition \ref{prop:fullsym} on the bifurcation sequence of the fully symmetric game, I make use of the following two lemmas.
\begin{lemma}\label{lem:uniform}
  Under full symmetry, the uniform state $(x_1,x_2) = (1/2, 1/2)$ is an equilibrium of the two-strategy, two-population convergence--divergence game for any combination of $\mu$ and $\sigma$.
\end{lemma}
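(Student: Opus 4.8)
The plan is to verify directly that the point $(x_1, x_2) = (1/2, 1/2)$ makes both right-hand sides of the dynamic \eqref{eq:RMFS} vanish, for every admissible $\mu$ and $\sigma$. Since the system is fully symmetric, it suffices to check that $\dot x_1 = 0$ at this point; the computation for $\dot x_2$ is identical by interchanging the roles of the two populations. Recall that under full symmetry we use the reduced fitnesses \eqref{eq:fitnesses3}.

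First I would evaluate the relevant quantities at $x_1 = x_2 = 1/2$, which gives $\widetilde{x}_1 = \widetilde{x}_2 = 1/2$. Substituting into \eqref{eq:fitnesses3}, the two fitnesses entering the first equation become
\begin{equation*}
  f_1(\mathbf{x}) = \tfrac{1}{2} + \sigma \cdot \tfrac{1}{2} \cdot \tfrac{1}{2} = \tfrac{1}{2}\left(1 + \tfrac{\sigma}{2}\right),
  \qquad
  \widetilde{f}_1(\mathbf{x}) = \tfrac{1}{2} + \sigma \cdot \tfrac{1}{2} \cdot \tfrac{1}{2} = \tfrac{1}{2}\left(1 + \tfrac{\sigma}{2}\right),
\end{equation*}
so that $f_1 = \widetilde{f}_1$ at the uniform state. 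The key structural observation is that at $x_1 = x_2 = 1/2$ the symmetry of \eqref{eq:fitnesses3} forces the fitness of each strategy to coincide in population $1$.

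Next I would substitute these into the first line of \eqref{eq:RMFS}. The two prefactors also coincide at the uniform state: $\widetilde{x}_1 - \mu = \tfrac{1}{2} - \mu$ and $x_1 - \mu = \tfrac{1}{2} - \mu$, while $x_1 \widetilde{x}_1 = \tfrac{1}{4}$ appears symmetrically. Hence
\begin{equation*}
  \dot x_1 = \left(\tfrac{1}{2} - \mu\right) \cdot \tfrac{1}{2} \cdot f_1(\mathbf{x}) - \left(\tfrac{1}{2} - \mu\right) \cdot \tfrac{1}{2} \cdot \widetilde{f}_1(\mathbf{x}) = \left(\tfrac{1}{2} - \mu\right)\tfrac{1}{2}\bigl[f_1(\mathbf{x}) - \widetilde{f}_1(\mathbf{x})\bigr] = 0,
\end{equation*}
since $f_1 = \widetilde{f}_1$ there. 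By the same computation with the indices swapped, $\dot x_2 = 0$ as well, establishing that $(1/2, 1/2)$ is an equilibrium regardless of the values of $\mu$ and $\sigma$.

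I do not anticipate a genuine obstacle here, as the result is essentially a direct substitution; the only point requiring care is to notice that one does not even need the prefactors $\tfrac{1}{2} - \mu$ to vanish. The equilibrium arises not from the mutation prefactors but from the exact balancing of the two fitnesses, $f_1 = \widetilde{f}_1$, which is itself a consequence of the strategy symmetry $S = \widetilde{S}$ together with the evaluation at the symmetric point. Making this mechanism explicit is the cleanest way to present the argument, since it shows the equilibrium is robust across the entire parameter range and explains \emph{why} the uniform state persists through all four phases of the bifurcation sequence.
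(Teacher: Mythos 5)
Your proof is correct and follows essentially the same route as the paper's: both arguments observe that at $(1/2,1/2)$ one has $\widetilde{x}_i = x_i$, hence $f_i(\mathbf{x}) = \widetilde{f}_i(\mathbf{x})$ by the symmetry of \eqref{eq:fitnesses3}, so the two terms in \eqref{eq:RMFS} cancel identically. Your explicit evaluation of the fitnesses and the remark that the cancellation does not rely on the prefactors vanishing are just a slightly more detailed rendering of the same argument.
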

\begin{proof}
  First, note that $\widetilde{x}_1 = x_1$ and $\widetilde{x}_2 = x_2$ if $(x_1,x_2)= (1/2, 1/2)$. Hence $f_i(\mathbf{x}) = \widetilde{f}_i(\mathbf{x})$ from \eqref{eq:fitnesses3}, and consequently $\dot x_i = (\widetilde{x}_i - \mu)x_if_i(\mathbf{x}) - (x_i - \mu) \widetilde{x}_i \widetilde{f}_i(\mathbf{x}) = 0$.
\end{proof}

\begin{lemma}\label{lem:full-symmetry-equilibria}
  Assume full symmetry. Then the following are equivalent in the two-strategy, two-population convergence--divergence game:
  \begin{enumerate}
    \item $(x_1, x_2)$ is an equilibrium.
    \item $(x_2, x_1)$ is an equilibrium.
    \item $(\widetilde{x}_2, \widetilde{x}_1)$ is an equilibrium.
    \item $(\widetilde{x}_1, \widetilde{x}_2)$ is an equilibrium.
  \end{enumerate}
\end{lemma}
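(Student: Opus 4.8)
The plan is to realize items 2--4 as the images of the point $(x_1,x_2)$ under three maps of the phase square — the population swap $\rho:(x_1,x_2)\mapsto(x_2,x_1)$, the complementation $\tau:(x_1,x_2)\mapsto(\widetilde{x}_1,\widetilde{x}_2)$, and their composite $\tau\circ\rho:(x_1,x_2)\mapsto(\widetilde{x}_2,\widetilde{x}_1)$ — and then to show that $\rho$ and $\tau$ are symmetries of the vector field defined by the right-hand sides of \eqref{eq:RMFS}. Write $g_i(x_1,x_2) = (\widetilde{x}_i - \mu)x_i f_i(\mathbf{x}) - (x_i - \mu)\widetilde{x}_i\widetilde{f}_i(\mathbf{x})$, so that $(x_1,x_2)$ is an equilibrium exactly when $g_1 = g_2 = 0$ there. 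It then suffices to show that $\rho$ and $\tau$ each carry the common zero set of $g_1,g_2$ to itself; the equivalence of all four statements follows by composition, since $\tau\circ\rho$ sends $(x_1,x_2)$ to $(\widetilde{x}_2,\widetilde{x}_1)$.

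First I would dispatch the swap $\rho$. Substituting directly into \eqref{eq:fitnesses3} yields the fitness identities $f_1(x_2,x_1) = f_2(x_1,x_2)$ and $\widetilde{f}_1(x_2,x_1) = \widetilde{f}_2(x_1,x_2)$ (and the same with indices exchanged), which hold because the two populations enter the fitnesses symmetrically through the single parameter $\sigma$. Since the mutation rate is the common $\mu$ in both populations, feeding these into the definition of $g_i$ gives $g_1(x_2,x_1) = g_2(x_1,x_2)$ and $g_2(x_2,x_1) = g_1(x_1,x_2)$; hence $g_1 = g_2 = 0$ at $(x_1,x_2)$ if and only if at $(x_2,x_1)$, which is the equivalence of items 1 and 2. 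Conceptually, $\rho$ is a symmetry precisely because of the population-symmetry half of the full-symmetry hypothesis (bisymmetric alignment and $M_1 = M_2$).

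Next I would treat the complementation $\tau$, which reflects the strategy-symmetry half ($S=\widetilde{S}$ and bisymmetric mutation). Reading \eqref{eq:fitnesses3} again gives $f_i(\widetilde{x}_1,\widetilde{x}_2) = \widetilde{f}_i(x_1,x_2)$ and $\widetilde{f}_i(\widetilde{x}_1,\widetilde{x}_2) = f_i(x_1,x_2)$. Under $\tau$ the mutation factor $\widetilde{x}_i - \mu$ becomes $x_i - \mu$ and conversely, so the two terms of $g_i$ swap places and the whole expression flips sign: $g_i(\widetilde{x}_1,\widetilde{x}_2) = -g_i(x_1,x_2)$ for $i = 1,2$. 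A sign flip preserves the zero set, giving the equivalence of items 1 and 4, and applying $\tau$ to the equilibrium $(x_2,x_1)$ furnished by $\rho$ closes the loop with item 3. There is no deep obstacle here — every identity is an immediate substitution into \eqref{eq:fitnesses3} — so the one step needing care is the sign bookkeeping in the $\tau$ computation, where one must track that $\widetilde{x}_i - \mu$ and $x_i - \mu$ interchange (this is exactly where bisymmetry of the mutation matrix, $m_i = \widetilde{m}_i$, is used) in order to obtain the clean relation $g_i \circ \tau = -g_i$ rather than an uncontrolled expression.
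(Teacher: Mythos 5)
Your proposal is correct and follows essentially the same route as the paper: the swap $\rho$ is handled by the interchangeability of the two populations, and the complementation $\tau$ by the identities $f_i(\widetilde{\mathbf{x}})=\widetilde{f}_i(\mathbf{x})$ and $\widetilde{f}_i(\widetilde{\mathbf{x}})=f_i(\mathbf{x})$, which in the paper appear as a rewriting of the equilibrium condition and in your version as the equivalent sign relation $g_i\circ\tau=-g_i$. The remaining case then follows by composing the two symmetries, exactly as in the paper.
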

\begin{proof}
  That $(x_1, x_2)$ is an equilibrium if and only if $(x_2, x_1)$ is an equilibrium follows immediately with the change of variables $x_1 \leftrightarrow x_2$. It then suffices to show that $(x_1,x_2)$ is an equilibrium if and only if $(\widetilde{x}_1, \widetilde{x}_2) = (1-x_1,1-x_2)$ is an equilibrium. For this, notice that $\widetilde{\widetilde{x}}_i = 1-(1-x_i) = x_i$ and that $\widetilde{f}_i(\mathbf{x}) = f_i(\widetilde{\mathbf{x}})$ and $f_i(\mathbf{x}) = \widetilde{f}_i(\widetilde{\mathbf{x}})$, where $\widetilde{\mathbf{x}} = (\widetilde{x}_1, \widetilde{x}_2)$. Thus $\dot x_i = 0$ if and only if $(\widetilde{x}_i - \mu)x_i f_i(\mathbf{x}) = (x_i - \mu)\widetilde{x}_i\widetilde{f}_i(\mathbf{x})$ if and only if $(\widetilde{x}_i - \mu) \widetilde{\widetilde{x}}_i \widetilde{f}_i(\widetilde{\mathbf{x}}) = (\widetilde{\widetilde{x}}_i - \mu)\widetilde{x}_i f_i(\widetilde{\mathbf{x}})$ if and only if $\dot{\widetilde{x}}_i = 0$. Hence $(x_1, x_2)$ is an equilibrium if and only if $(\widetilde{x}_1, \widetilde{x}_2)$ is.
\end{proof}

We can now state and prove Proposition \ref{prop:fullsym}.
\begin{prop}\label{prop:fullsym}
  A two-strategy, two-population convergence--divergence game with full symmetry has either 1, 3, 5 or 9 equilibria depending on the value of the mutation rate and social alignment parameters $\mu$ and $\sigma$. These phases are separated by three critical values of $\mu$---$\mu_1$, $\mu_2$ and $\mu_3$---satisfying the following relationship for any $\sigma > 0$:
  \begin{equation}
  0 < \mu_1 = \frac{\sqrt{2\sigma^2 + 4\sigma + 4} - \sigma - 2}{\sigma^2} < \mu_2 = \frac{1}{\sigma + 4} < \mu_3 = \frac{\sigma + 1}{3\sigma + 4} < 1.
  \end{equation}
  At each of these critical boundaries, one or two equilibria of the system are non-hyperbolic and undergo a bifurcation, giving rise to the following four phases:
  \begin{enumerate}
    \item[I.] For $\mu > \mu_3$, the system has one equilibrium, the uniform state $(1/2, 1/2)$, which is stable (a sink). In the transition from phase I to phase II, the uniform state undergoes a bifurcation, giving rise to two further equilibria:
    \item[II.] For $\mu_2 < \mu < \mu_3$, the system has three equilibria, the uniform state $(1/2, 1/2)$, which is unstable (a saddle), and two equilibria on the diagonal
      \begin{equation}
        \widetilde{D} := \{(x_1, x_2) \in [0,1]^2 : x_1 = \widetilde{x}_2\},
      \end{equation}
      which are both stable (sinks). The on-diagonal equilibria are $(x_1,x_2) = (1/2 + \delta_{\widetilde{D}}, 1/2 - \delta_{\widetilde{D}})$ and $(x_1,x_2) = (1/2 - \delta_{\widetilde{D}}, 1/2 + \delta_{\widetilde{D}})$ with the displacement from $1/2$ given by
      \begin{equation}\label{eq:displacement}
  \delta_{\widetilde{D}} = \frac{\sqrt{-(3\sigma^2 + 4\sigma)\mu^2 - 2(\sigma^2 + 3\sigma + 2)\mu + (\sigma + 1)^2}}{2(\sigma\mu + \sigma + 1)}.
      \end{equation}
      In the transition from phase II to phase III, the uniform state $(1/2, 1/2)$ undergoes a second bifurcation, giving rise to two equilibria on the other diagonal:
\item[III.] For $\mu_1 < \mu < \mu_2$, the system has five equilibria, the above-mentioned uniform state and $\widetilde{D}$-equilibria, as well as two further equilibria on the diagonal
  \begin{equation}
        D := \{(x_1, x_2) \in [0,1]^2 : x_1 = x_2\},
  \end{equation}
      namely $(x_1,x_2) = (1/2 + \delta_D, 1/2 + \delta_D)$ and $(x_1,x_2) = (1/2 - \delta_D, 1/2 - \delta_D)$ with displacement
      \begin{equation}
  \delta_{D} = \frac{\sqrt{(\sigma^2 + 4\sigma)\mu^2 - 2(\sigma + 2)\mu + 1}}{2(\sigma\mu - 1)}.
      \end{equation}
      The uniform state is unstable (now a source), the $\widetilde{D}$-equilibria are stable (sinks) and the $D$-equilibria are unstable (saddles). In the transition from phase III to phase IV, both $D$-equilibria undergo a bifurcation, giving rise to four further rest points:
    \item[IV.] For $\mu < \mu_1$, the system has nine equilibria, the above-mentioned uniform state, $\widetilde{D}$-equilibria and $D$-equilibria, as well as four off-diagonal equilibria whose values are too complicated to calculate (except in the border case $\mu = 0$). The uniform state is unstable (a source), the $\widetilde{D}$-equilibria are stable (sinks), the $D$-equilibria are stable (sinks) and the off-diagonal equilibria are unstable (saddles).
  \end{enumerate}
\end{prop}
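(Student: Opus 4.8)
The plan is to turn the symmetries recorded in Lemma~\ref{lem:full-symmetry-equilibria} into an explicit group action and to diagonalise the problem in adapted coordinates. The two generating symmetries are the swap $s:(x_1,x_2)\mapsto(x_2,x_1)$ and the complement $c:(x_1,x_2)\mapsto(\widetilde{x}_1,\widetilde{x}_2)$; writing $V=(V_1,V_2)$ for the right-hand side of \eqref{eq:RMFS}, the identities $\widetilde{f}_i(\mathbf{x})=f_i(\widetilde{\mathbf{x}})$ used in that proof give $V_1(x_2,x_1)=V_2(x_1,x_2)$ and $V_i(\widetilde{\mathbf{x}})=-V_i(\mathbf{x})$, so that $s$ and $c$ generate a $\mathbf{Z}_2\times\mathbf{Z}_2$ action under which $s$ and $cs$ preserve time while $c$ reverses the sign of $V$. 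First I would pass to the coordinates $u=x_1+x_2-1$ and $v=x_1-x_2$, in which $D=\{v=0\}$ and $\widetilde{D}=\{u=0\}$ and the group acts by the four sign changes $(u,v)\mapsto(\pm u,\pm v)$. Tracking parities through $\dot u=V_1+V_2$ and $\dot v=V_1-V_2$ shows that $\dot u$ is odd in $u$ and even in $v$, while $\dot v$ is even in $u$ and odd in $v$; since the $V_i$ are cubic (Proposition~\ref{prop:bezout}), this forces the factorised form
\begin{equation}
\dot u = u\,P(u^2,v^2),\qquad \dot v = v\,Q(u^2,v^2),
\end{equation}
with $P$ and $Q$ affine in $(u^2,v^2)$. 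This single structural fact does almost all of the work.

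From the factorisation the equilibria sort into four families, matching the claimed counts. The uniform state is $u=v=0$ (Lemma~\ref{lem:uniform}); the $\widetilde{D}$-equilibria are $u=0$ together with $Q(0,v^2)=0$; the $D$-equilibria are $v=0$ together with $P(u^2,0)=0$; and the off-diagonal equilibria solve the \emph{linear} system $P(u^2,v^2)=Q(u^2,v^2)=0$ in the unknowns $(u^2,v^2)$, which has a unique solution $(U_\ast,V_\ast)$ and hence contributes exactly four points $(\pm\sqrt{U_\ast},\pm\sqrt{V_\ast})$ whenever $U_\ast,V_\ast>0$ and $\sqrt{U_\ast}+\sqrt{V_\ast}\le 1$ (so that the points lie in $[0,1]^2$). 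Because $P$ and $Q$ are affine in $(u^2,v^2)$, each of $Q(0,v^2)=0$ and $P(u^2,0)=0$ is linear in $v^2$ respectively $u^2$, giving the single displacements $\delta_{\widetilde{D}}$ and $\delta_{D}$ of \eqref{eq:displacement} (here $v=2\delta_{\widetilde{D}}$ on $\widetilde{D}$ and $u=2\delta_{D}$ on $D$). Each family is present precisely when its radicand is nonnegative, so the three critical values are located by the vanishing of the relevant quantity: $Q(0,0)=0$ gives $\mu_3$, $P(0,0)=0$ gives $\mu_2$, and the condition that the off-diagonal solution collide with the $D$-equilibria, $V_\ast=0$, i.e.\ $Q(U_\ast,0)=0$ with $U_\ast=4\delta_D^2$ solving $P(U_\ast,0)=0$, gives $\mu_1$. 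The running count $1+2+2+4=9$ in the innermost phase saturates the Bézout bound of Proposition~\ref{prop:bezout}, confirming that no equilibria have been missed.

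Stability is then read off almost for free, because on each invariant line the off-diagonal entries of the Jacobian vanish. On $D$ ($v=0$) one finds $\partial\dot u/\partial v=\partial\dot v/\partial u=0$, so the Jacobian is diagonal with a tangential eigenvalue $\tfrac{d}{du}[u\,P(u^2,0)]$ and a transverse eigenvalue $Q(u^2,0)$; likewise on $\widetilde{D}$ ($u=0$) the eigenvalues are transverse $P(0,v^2)$ and tangential $\tfrac{d}{dv}[v\,Q(0,v^2)]$, and at the uniform state the eigenvalues are simply $P(0,0)$ and $Q(0,0)$. The bifurcation narrative is exactly the sign history of these quantities as $\mu$ decreases: $Q(0,0)$ changes sign at $\mu_3$ (sink $\to$ saddle, pitchfork emitting the stable $\widetilde{D}$-equilibria), $P(0,0)$ changes sign at $\mu_2$ (saddle $\to$ source, pitchfork emitting the $D$-equilibria, which are saddles because their transverse eigenvalue $Q(U_\ast,0)$ is still positive), and the transverse eigenvalue of the $D$-equilibria changes sign at $\mu_1$ (saddle $\to$ sink, with four off-diagonal saddles peeling off). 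Evaluating the $P(0,0)$, $Q(0,0)$ and $Q(U_\ast,0)$ expressions symbolically then yields the closed forms for $\mu_1,\mu_2,\mu_3$.

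The main obstacle is the off-diagonal family and its threshold $\mu_1$: although the reduction to a linear system in $(u^2,v^2)$ makes the existence question tractable in principle, extracting $\mu_1$ requires solving $Q(U_\ast,0)=0$, which is quadratic in $\mu$ and produces the radical in the statement, and one must then verify the strict ordering $0<\mu_1<\mu_2<\mu_3<1$ for every $\sigma>0$ together with the range condition $\sqrt{U_\ast}+\sqrt{V_\ast}\le 1$, so that the new rest points genuinely lie in the phase space. A secondary obstacle is purely computational: confirming the signs of all the eigenvalues throughout each phase, and in particular that the tangential eigenvalues behave as claimed, is best delegated to a computer algebra system, as was done for the Jacobian and its eigenvalues in the present work. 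The remaining inequalities are elementary but tedious, and I would discharge them by clearing denominators and reducing each to a manifestly-signed polynomial in $\sigma$.
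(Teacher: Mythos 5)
Your argument is correct in outline, but it follows a genuinely different route from the paper's. The paper proceeds by direct computation: it first solves the $\mu=0$ replicator limit completely (nine explicit equilibria with their eigenvalues), then restricts the dynamic to each diagonal and applies the cubic formula to obtain $\delta_D$ and $\delta_{\widetilde{D}}$, reads $\mu_2$ and $\mu_3$ off the radicands and off the eigenvalues of the Jacobian at the uniform state, and locates $\mu_1$ as the sign change of the second eigenvalue at the $D$-equilibria, with all Jacobian computations delegated to computer algebra. You instead promote Lemma \ref{lem:full-symmetry-equilibria} to a $\mathbf{Z}_2\times\mathbf{Z}_2$ equivariance and diagonalise it in the coordinates $u=x_1+x_2-1$, $v=x_1-x_2$; the parity bookkeeping is correct (it follows from $\widetilde{f}_i(\mathbf{x})=f_i(\widetilde{\mathbf{x}})$ and the swap symmetry, exactly as used in the proof of Lemma \ref{lem:full-symmetry-equilibria}), and the resulting factorisation $\dot u=u\,P(u^2,v^2)$, $\dot v=v\,Q(u^2,v^2)$ with $P,Q$ affine buys a lot: the four families and the count $1+2+2+4=9$ drop out at once, the cubic formula is replaced by linear equations in $(u^2,v^2)$, the diagonality of the Jacobian on $D$ and $\widetilde{D}$ is explained rather than observed, and your $Q(0,0)$, $P(0,0)$ and $Q(U_\ast,0)$ are precisely the paper's $\lambda_{\widetilde{D}}(\mathbf{u})$, $\lambda_D(\mathbf{u})$ and $\lambda_2(\mathbf{d}_1)$. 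Two places where your version is thinner than the paper's and would need to be shored up: you must check that the $2\times2$ linear system $P=Q=0$ is nondegenerate in $(u^2,v^2)$ before concluding that the off-diagonal family has exactly four members, and your claim that those four points are saddles throughout phase IV rests only on the subcriticality of the pitchfork at $\mu_1$, whereas the paper at least pins them down and computes their eigenvalues explicitly in the boundary case $\mu=0$, where they sit on the edges of the unit square. Both versions ultimately discharge the remaining sign checks symbolically, so these are gaps of the same kind the paper itself tolerates rather than defects of the strategy.
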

\begin{proof}
  For notational convenience, I shall write $x = x_1$, $y = x_2$, $f = f_1$ and $g = f_2$ and omit the arguments on the fitnesses in what follows.

  To make a start, let us consider the zero-mutation special case $\mu = 0$. The system (\ref{eq:RMFS}) then reduces to the pair of replicator equations
\begin{equation*}
  \left\{
    \begin{aligned}
      \dot x &= x \widetilde{x} (f - \widetilde{f}) \\
      \dot y &= y \widetilde{y} (g - \widetilde{g}).
    \end{aligned}
  \right.
\end{equation*}
From this, it is obvious that each vertex of the unit square $[0,1]^2$ is an equilibrium in addition to the uniform equilibrium $(1/2, 1/2)$ guaranteed by Lemma \ref{lem:uniform}. Further equilibria exist if $f = \widetilde{f}$ or $g = \widetilde{g}$ or both. From (\ref{eq:fitnesses3}), solving $f = \widetilde{f}$ for $x$ yields
\begin{equation*}
  x = \frac{1 + \sigma y}{2 + \sigma}.
\end{equation*}
In particular,
\begin{equation*}
  x \rvert_{y=0} = \frac{1}{2 + \sigma} =: \xi_0
  \quad\textnormal{and}\quad
  x \rvert_{y=1} = \frac{1 + \sigma}{2 + \sigma} =: \xi_1,
\end{equation*}
so $(\xi_0, 0)$ and $(\xi_1, 1)$ are rest points. Symmetrically,
\begin{equation*}
  y \rvert_{x=0} = \frac{1}{2 + \sigma} =: \upsilon_0
  \quad\textnormal{and}\quad
  y \rvert_{x=1} = \frac{1 + \sigma}{2 + \sigma} =: \upsilon_1
\end{equation*}
are solutions of $g = \widetilde{g}$. Hence $(0, \upsilon_0)$ and $(1, \upsilon_1)$ are also equilibria. By Proposition \ref{prop:bezout}, there are no further equilibria. 

  To find out the stabilities of these rest points, we study the system's linearization around each of them. For general $\mu$, the Jacobian of \eqref{eq:RMFS} with fitnesses \eqref{eq:fitnesses3} is\footnote{I omit details of purely mechanical calculation, such as the computation of the Jacobian and its eigenvalues. A symbolic implementation of these calculations is available (see ``Data accessibility'').}
\begin{equation*}
  J = \begin{pmatrix}
    f[2(\widetilde{x} - \mu) - x] + \widetilde{f}[2(x - \mu) - \widetilde{x}] & -\sigma [\widetilde{x}^2(x - \mu) + x^2(\widetilde{x} - \mu)] \\
    -\sigma[\widetilde{y}^2(y - \mu) + y^2(\widetilde{y} - \mu)] & g[2(\widetilde{y} - \mu) - y] + \widetilde{g}[2(y - \mu) - \widetilde{y}]
  \end{pmatrix}.
\end{equation*}
With $\mu = 0$, its eigenvalues at the above nine rest points are as follows:
\begin{enumerate}
  \item At $(0,0)$ and $(1,1)$: $-1$ (repeated)
  \item At $(0,1)$ and $(1,0)$: $-(1 + \sigma)$ (repeated)
  \item At $(1/2, 1/2)$: $1/2$ and $(1+\sigma)/2$
  \item At $(\xi_0, 0)$, $(\xi_1, 1)$, $(0, \nu_0)$ and $(1, \nu_1)$: $(1 + \sigma)/(2 + \sigma)$ and $-2(1 + \sigma)/(2 + \sigma)$.
\end{enumerate}
Bearing in mind that $\sigma > 0$, we thus find that the vertex rest points are all sinks, that the uniform state $(1/2, 1/2)$ is a source, and that the remaining four rest points on the edges of the unit cube are saddles.

As the value of the mutation rate parameter $\mu$ is increased, the expectation is that these equilibria will gradually vanish, so that at the limit of maximal mutation only the uniform equilibrium remains (cf.~similar treatments of replicator--mutator dynamics with linear fitness in \cite{Mit2003,KomEtal2001,Kom2004,DuoHan2020}). To study this bifurcation scenario in detail, I next assume that, under full symmetry, it is reasonable to expect at least some of these equilibria to fall on the diagonals of the unit square,
\begin{equation*}
  D = \{(x,y) \in [0,1]^2 : x = y\}
  \quad\textnormal{and}\quad
  \widetilde{D} = \{(x,y) \in [0,1]^2 : x = \widetilde{y}\}.
\end{equation*}
Note that the existence of an on-diagonal equilibrium $(x,y) \in (D \cup \widetilde{D}) \setminus \{(1/2, 1/2)\}$ which is not the uniform state $(1/2, 1/2)$ implies the existence of one further equilibrium (namely, its reflection about the other diagonal), and that the existence of an off-diagonal equilibrium $(x,y) \in [0,1]^2 \setminus (D \cup \widetilde{D})$ implies the existence of three further equilibria (its reflections about each diagonal) by Lemma \ref{lem:full-symmetry-equilibria}. Since the number of equilibria is between 1 and 9 by Proposition \ref{prop:bezout}, it follows that the number of equilibria for any combination of $\mu$ and $\sigma$ has to be either 1, 3, 5, 7 or 9. This suggests a sequence of pitchfork bifurcations as $\mu$ decreases from $1$ towards $0$.

On-diagonal equilibria are not difficult to find algebraically. Upon the relevant substitutions in (\ref{eq:RMFS}), direct mechanical application of the cubic formula yields nontrivial roots $x_{\pm D} = 1/2 \pm \delta_{D}$ for $\dot x \rvert_{y=x}$ with
\begin{equation*}
  \delta_{D} = \frac{\sqrt{(\sigma^2 + 4\sigma)\mu^2 - 2(\sigma + 2)\mu + 1}}{2(\sigma\mu - 1)} = \frac{\sqrt{A(\sigma, \mu)}}{2(\sigma\mu - 1)},
\end{equation*}
where I write $A(\sigma, \mu)$ for the argument of the square root. Due to symmetry, $\dot y \rvert_{x=y}$ has the same roots. Hence there are a maximum of two nontrivial ($\delta_{D} \neq 0$) rest points on the $D$-diagonal, and these are $\mathbf{d}_1 := (x_{+D}, y_{+D})$ and $\mathbf{d}_2 := (x_{-D}, y_{-D})$ provided that the roots exist in the reals and in particular in the interval $[0,1]$. On the other diagonal, the solutions are $x_{\pm \widetilde{D}} = y_{\pm \widetilde{D}} = 1/2 \pm \delta_{\widetilde{D}}$ with
\begin{equation*}
  \delta_{\widetilde{D}} = \frac{\sqrt{-(3\sigma^2 + 4\sigma)\mu^2 - 2(\sigma^2 + 3\sigma + 2)\mu + (\sigma + 1)^2}}{2(\sigma\mu + \sigma + 1)} = \frac{\sqrt{\widetilde{A}(\sigma, \mu)}}{2(\sigma\mu + \sigma + 1)}.
\end{equation*}
It follows that the $\widetilde{D}$-diagonal equilibria are $\widetilde{\mathbf{d}}_1 := (x_{+\widetilde{D}}, y_{-\widetilde{D}})$ and $\widetilde{\mathbf{d}}_2 := (x_{-\widetilde{D}}, y_{+\widetilde{D}})$.

The nontrivial $D$-equilibria exist whenever $A(\sigma, \mu) > 0$ and $\delta_D$ satisfies $|\delta_D| \leq 1/2$. It is easy to verify that $|\delta_D| \leq 1/2$ if and only if $\mu < 1/\sigma$. On the other hand, $A(\sigma,\mu)$ is an upward-opening parabola in $\mu$, and has roots $\mu_1 = 1/(\sigma + 4) > 0$ and $\mu_2 = 1/\sigma > \mu_1$ as can easily be verified. Thus, the $D$-equilibria $\mathbf{d}_1$ and $\mathbf{d}_2$ exist whenever $\mu < 1/(\sigma + 4)$. A similar argument concerning $\delta_{\widetilde{D}}$ and $\widetilde{A}(\sigma,\mu)$ shows that the $\widetilde{D}$-equilibria $\widetilde{\mathbf{d}}_1$ and $\widetilde{\mathbf{d}}_2$ exist whenever $\mu < (\sigma + 1)/(3\sigma + 4)$. We have thus identified two critical values of the mutation parameter $\mu$,
\begin{equation*}
  \mu_D := \frac{1}{\sigma + 4}
  \quad\textnormal{and}\quad
  \mu_{\widetilde{D}} := \frac{\sigma + 1}{3\sigma + 4},
\end{equation*}
supplying upper bounds for the existence of nontrivial on-diagonal equilibria. Since
\begin{equation*}
  \frac{1}{\sigma + 4} < \frac{\sigma + 1}{3\sigma + 4}
\end{equation*}
for any $\sigma > 0$, the existence of the $D$-equilibria implies the existence of the $\widetilde{D}$-equilibria. 

The pairs of on-diagonal equilibria ($\mathbf{d}_1$ and $\mathbf{d}_2$, $\widetilde{\mathbf{d}}_1$ and $\widetilde{\mathbf{d}}_2$) bifurcate from the uniform equilibrium $\mathbf{u} = (1/2, 1/2)$. To see this, consider the eigenvalues of the Jacobian evaluated at the uniform equilibrium $\mathbf{u}$. In the fully symmetric case, these are easy enough to solve and are found to be
\begin{equation*}
  \lambda_D(\mathbf{u}) := -\frac{\sigma + 4}{2}\mu + \frac{1}{2}
  \quad\textnormal{and}\quad
  \lambda_{\widetilde{D}}(\mathbf{u}) := -\frac{3\sigma + 4}{2}\mu + \frac{1 + \sigma}{2}.
\end{equation*}
For fixed $\sigma$, each eigenvalue is thus affine in $\mu$, with roots
\begin{equation*}
  \frac{1}{\sigma + 4} = \mu_D
  \quad\textnormal{and}\quad
  \frac{\sigma + 1}{3\sigma + 4} = \mu_{\widetilde{D}},
\end{equation*}
respectively (Fig.~\ref{fig:eigen1}A). For $\mu < \mu_{D}$, both eigenvalues are positive and $\mathbf{u}$ is a source. For $\mu_{D} < \mu < \mu_{\widetilde{D}}$, $\lambda_{\widetilde{D}}(\mathbf{u})$ is positive and $\lambda_{D}(\mathbf{u})$ negative, and $\mathbf{u}$ is a saddle. For $\mu_{\widetilde{D}} < \mu$, both eigenvalues are negative, and $\mathbf{u}$ is a sink.

\begin{figure}
  \centering
  \includegraphics[width=1.0\textwidth]{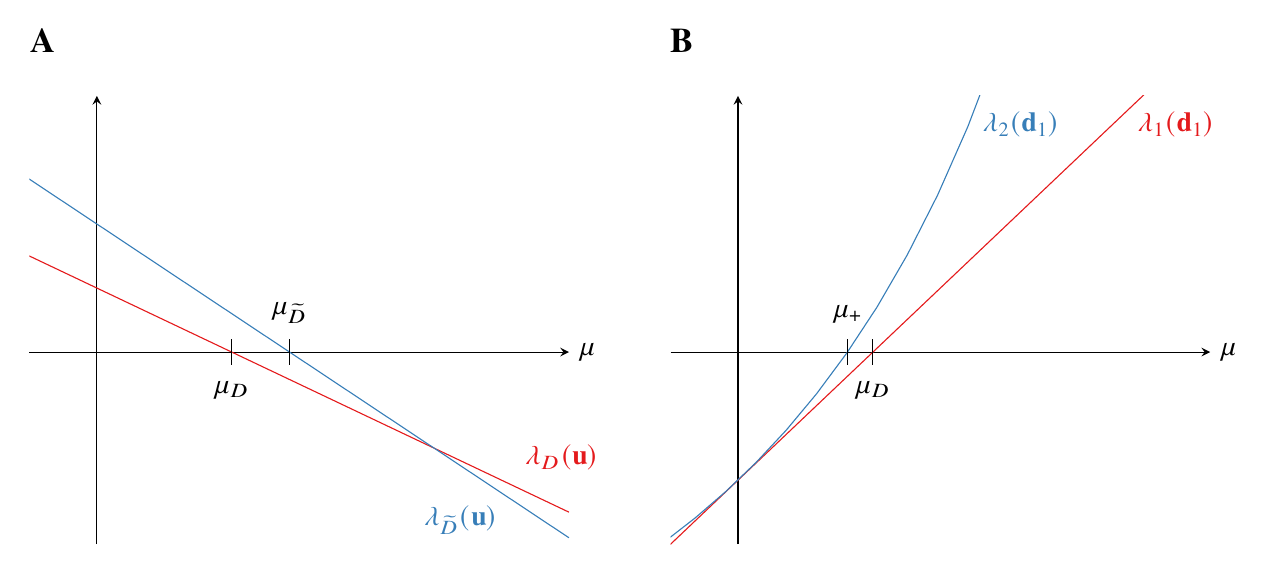}
  \caption{\textbf{(A)} The eigenvalues $\lambda_D(\mathbf{u})$ and $\lambda_{\widetilde{D}}(\mathbf{u})$ of the Jacobian, evaluated at the uniform rest point $\mathbf{u} = (1/2, 1/2)$, as a function of mutation rate $\mu$. \textbf{(B)} The eigenvalues $\lambda_1(\mathbf{d}_1)$ and $\lambda_2(\mathbf{d}_1)$ of the Jacobian, evaluated at the diagonal rest point $\mathbf{d}_1$ (see text), as a function of mutation rate $\mu$.}\label{fig:eigen1}
\end{figure}

The existence of the (unstable) side-of-the-square equilibria $(\xi_0, 0)$, $(\xi_1, 1)$, $(0, \upsilon_0)$ and $(1, \upsilon_1)$ in the zero-mutation $\mu = 0$ case suggests that off-diagonal equilibria should exist in the $\mu > 0$ case too, possibly bifurcating from the equilibria on the $D$-diagonal as these approach the vertices as $\mu$ tends to zero. This is, indeed, the case. While it has not been possible to find an explicit solution of these equilibria, the bifurcation points can be identified relatively easily by evaluating the eigenvalues of the Jacobian at the $D$-equilibria $\mathbf{d}_1$ and $\mathbf{d}_2$. Due to symmetry, it suffices to consider $\mathbf{d}_1$ only. The eigenvalues are found to be
\begin{equation*}
  \lambda_1(\mathbf{d}_1) = (\sigma + 4)\mu - 1
  \quad\textnormal{and}\quad
  \lambda_2(\mathbf{d}_1) = \frac{\sigma^2\mu^2 + 2(\sigma + 2)\mu - 1}{1 - \sigma\mu}.
\end{equation*}
Thus, $\lambda_1(\mathbf{d}_1)$ is affine in $\mu$ with a root at $\mu = 1/(\sigma + 4) = \mu_D$. As the $D$-equilibria only exist when $\mu < \mu_D$, it thus suffices to consider values of $\mu$ less than $1/(\sigma + 4)$, which also takes care of the singularity $\mu = 1/\sigma$ for $\lambda_2(\mathbf{d}_1)$. On the other hand, $\lambda_2(\mathbf{d}_1)$ has roots at
\begin{equation*}
  \mu_- := \frac{-(\sigma + 2) - \sqrt{2\sigma^2 + 4\sigma + 4}}{\sigma^2}
  \quad\textnormal{and}\quad
  \mu_+ := \frac{-(\sigma + 2) + \sqrt{2\sigma^2 + 4\sigma + 4}}{\sigma^2}.
\end{equation*}
Of these, $\mu_-$ is always negative while $\mu_+$ is always positive with $\mu_+ \leq \mu_D$. The $D$-diagonal equilibrium undergoes a subcritical pitchfork bifurcation at $\mu_+$ as the second eigenvalue $\lambda_2(\mathbf{d}_1)$ changes sign as the value of $\mu$ is decreased (Fig.~\ref{fig:eigen1}B). Consequently, this $\mu_+$ is the third critical value of the mutation parameter. Writing $\mu_1 = \mu_+$, $\mu_2 = \mu_D$ and $\mu_3 = \mu_{\widetilde{D}}$, we are done.
\end{proof}

\printbibliography

@incollection{Mic2020,
	author = "Michaud, Jérôme",
	title = "A game theoretic perspective on the utterance selection model for language change",
	year = "2020",
  publisher = {The Evolution of Language Conferences},
  address = {Nijmegen},
	booktitle = "The Evolution of Language: {Proceedings of the 13th International Conference (EvoLang13)}",
  pages = {283--290},
	editor = "Ravignani, A. and Barbieri, C. and Martins, M. and Flaherty, M. and Jadoul, Y. and Lattenkamp, E. and Little, H. and Mudd, K. and Verhoef, T.",
	doi = "10.17617/2.3190925",
	url = "http://brussels.evolang.org/proceedings/paper.html?nr=39",
}

@article{McCar2010,
author	= {McCarthy, Corrine},
year	= {2010},
title	= {The {Northern Cities Shift} in real time: evidence from {Chicago}},
journal	= {University of Pennsylvania Working Papers in Linguistics},
volume	= {15},
number	= {2},
pages	= {101--110},
}

@article{GreEtAl1998,
author	= {Greenwald, Anthony G. and McGhee, Debbie E. and Schwartz, Jordan L. K.},
year	= {1998},
title	= {Measuring individual differences in implicit cognition: the implicit association test},
journal	= {Journal of Personality and Social Psychology},
volume	= {74},
number	= {6},
pages	= {1464--1480},
}

@article{Julia,
  title={Julia: A fresh approach to numerical computing},
  author={Bezanson, Jeff and Edelman, Alan and Karpinski, Stefan and Shah, Viral B},
  journal={SIAM Review},
  volume={59},
  number={1},
  pages={65--98},
  year={2017},
  doi = {10.1137/141000671},
}

@article{BalEtAl2014,
author  = {Balliet, D. and Wu, J. and De Dreu, C. K. W.},
year  = {2014},
title = {Ingroup favoritism in cooperation: a meta-analysis},
journal = {Psychological Bulletin},
volume  = {140},
number  = {6},
pages = {1556--1581},
doi = {10.1037/a0037737},
}

@article{DorPri1980,
author  = {Dormand, J. R. and Prince, P. J.},
title = {A family of embedded {Runge--Kutta} formulae},
volume  = {6},
number  = {1},
pages = {19--26},
year  = {1980},
journal = {Journal of Computational and Applied Mathematics},
doi = {10.1016/0771-050X(80)90013-3},
}

@article{BauRit2017,
  author  = {Baumann, A. and Ritt, N.},
 year = {2017},
 title  = {On the replicator dynamics of lexical stress: Accounting for stress-pattern diversity in terms of evolutionary game theory},
 journal  = {Phonology},
volume  = {34},
number  = {3},
pages = {439--471},
doi = {10.1017/S0952675717000240},
}

@book{Zeh2019,
  author  = {Zehentner, Eva},
 year = {2019},
 title  = {Competition in language change: The rise of the {English} dative alternation},
 address  = {Berlin},
 publisher  = {De Gruyter},
}

@article{CutEtAl2004,
 author  = {Cutler, Anne and Weber, Andrea and Smits, Roel and Cooper, Nicole},
 title = {Patterns of {English} phoneme confusions by native and non-native listeners},
 pages = {3668--3678},
 journal = {The Journal of the Acoustical Society of America},
 volume  = {116},
  number  = {6},
  year  = {2004},
  doi = {10.1121/1.1810292},
}

@article{PagNow2002,
  author  = {Page, Karen M. and Nowak, Martin A.},
  year    = {2002},
  title   = {Unifying evolutionary dynamics},
  journal = {Journal of Theoretical Biology},
  volume  = {219},
  number  = {1},
  pages   = {93--98},
  doi = {10.1006/jtbi.2002.3112},
}

@book{MaySmi1982,
  author  = {Maynard Smith, John},
  year    = {1982},
  title   = {Evolution and the theory of games},
  publisher = {Cambridge University Press},
  address = {Cambridge},
}

@article{MaySmiPri1973,
  author  = {Maynard Smith, John and Price, George},
  year  = {1973},
  title = {The logic of animal conflict},
  journal = {Nature},
  volume  = {146},
  pages = {15--18},
  doi = {10.1038/246015a0},
}

@book{VinBro2005,
author  = {Vincent, Thomas L. and Brown, Joel S.},
year  = {2005},
title = {Evolutionary game theory, natural selection, and {Darwinian} dynamics},
publisher = {Cambridge University Press},
address = {Cambridge},
}

@incollection{GilOga2007,
author	= {Giles, Howard and Ogay, Tania},
year	= {2007},
title	= {Communication accommodation theory},
editor	= {Whaley, B. B. and Samter, W.},
booktitle	= {Explaining communication: contemporary theories and exemplars},
pages	= {293--310},
address	= {Mahwah, NJ},
publisher	= {Lawrence Erlbaum},
}

@article{DoiEtAl1976,
  author  = {Doise, Willem and Sinclair, Anne and Bourhis, Richard Yvon},
  year    = {1976},
  title   = {Evaluation of accent convergence and divergence in cooperative and competitive intergroup situations},
  journal = {British Journal of Social and Clinical Psychology},
  volume  = {15},
  number  = {3},
  pages   = {247--252},
  doi = {10.1111/j.2044-8260.1976.tb00031.x},
}

@article{Bab2010,
  author  = {Babel, Molly},
  year    = {2010},
  title   = {Dialect divergence and convergence in {New Zealand English}},
  journal = {Language in Society},
  volume  = {39},
  number  = {4},
  pages   = {437--456},
  doi = {10.1017/S0047404510000400},
}

@article{Par2012,
  author  = {Pardo, Jennifer S. and Gibbons, Rachel and Suppes, Alexandra and Krauss, Robert M.},
  year    = {2012},
  title   = {Phonetic convergence in college roommates},
  journal = {Journal of Phonetics},
  volume  = {40},
  number  = {1},
  pages   = {190--197},
  doi = {10.1016/j.wocn.2011.10.001},
}

@article{Net1999,
  author  = {Nettle, Daniel},
  year    = {1999},
  journal = {Lingua},
  volume  = {108},
  pages   = {95--117},
  title   = {Using {Social Impact Theory} to simulate language change},
  doi = {10.1016/S0024-3841(98)00046-1}
}

@incollection{Eck1980,
  author  = {Eckert, Penelope},
  year    = {1980},
  title   = {Clothing and geography in a suburban high school},
  editor  = {Kottak, Conrad P.},
  booktitle = {Researching {American} culture},
  address = {Ann Arbor, MI},
  publisher = {University of Michigan Press},
  pages   = {45--48},
}

@article{Eck1988,
  author  = {Eckert, Penelope},
  year    = {1988},
  title   = {Adolescent social structure and the spread of linguistic change},
  journal = {Language in Society},
  volume  = {17},
  number  = {2},
  pages   = {183--207},
  doi = {10.1017/S0047404500012756},
}

@article{EckLab2017,
  author  = {Eckert, Penelope and Labov, William},
  year    = {2017},
  title   = {Phonetics, phonology and social meaning},
  journal = {Journal of Sociolinguistics},
  volume  = {21},
  number  = {4},
  pages   = {467--496},
  doi = {10.1111/josl.12244},
}

@article{Kau2017,
  author  = {Kauhanen, Henri},
  year    = {2017},
  title   = {Neutral change},
  journal = {Journal of Linguistics},
  volume  = {53},
  number  = {2},
  pages   = {327--358},
  doi = {10.1017/S0022226716000141},
}

@incollection{Oha1981,
   Author = {Ohala, John J.},
   Title = {The listener as a source of sound change},
   BookTitle = {Papers from the parasession on language and behavior},
   Editor = {Masek, Carrie S. and Hendrick, Roberta A. and Miller, Mary Frances},
   Series = {Chicago Linguistic Society 17},
   Publisher = {Chicago Linguistics Society},
   Address = {Chicago, IL},
   Pages = {178--203},
      Year = {1981} }

@article{Wag2012,
author	= {Wagner, Suzanne Evans},
year	= {2012},
title	= {Age grading in sociolinguistic theory},
journal	= {Language and Linguistics Compass},
volume	= {6},
pages	= {371--382},
doi = {10.1002/lnc3.343},
}

@article{FudHar1992,
author	= {Fudenberg, D. and Harris, C.},
title	= {Evolutionary dynamics with aggregate shocks},
year	= {1992},
journal	= {Journal of Economic Theory},
volume	= {57},
number	= {2},
pages	= {420--441},
doi = {10.1016/0022-0531(92)90044-I},
}

@book{LabEtal1972,
  author  = {Labov, William and Yaeger, Malcah and Steiner, Richard},
  year    = {1972},
  title   = {A quantitative study of sound change in progress},
  address = {Philadelphia, PA},
  publisher = {U. S. Regional Survey},
}

@book{Li2018,
  author  = {Li, Michael Y.},
  year    = {2018},
  title   = {An introduction to mathematical modeling of infectious diseases},
  publisher = {Springer},
  address = {Cham},
}

@book{VegRed1996,
  author  = {Vega-Redondo, Fernando},
  year    = {1996},
  title   = {Evolution, games, and economic behaviour},
  publisher = {Oxford University Press},
  address = {Oxford},
}

@article{StaEtal2016,
  author  = {Stadler, Kevin and Blythe, Richard A. and Smith, Kenny and Kirby, Simon},
  year    = {2016},
  title   = {Momentum in language change: a model of self-actuating s-shaped curves},
  journal = {Language Dynamics and Change},
  volume  = {6},
  pages   = {171--198},
  doi = {10.1163/22105832-00602005},
}

@incollection{WeiLabHer1968,
  author  = {Weinreich, Uriel and Labov, William and Herzog, Marvin I.},
  year    = {1968},
  title   = {Empirical foundations for a theory of language change},
  editor  = {Lehmann, Winfred P. and Malkiel, Yakov},
  booktitle   = {Directions for historical linguistics: a symposium},
  pages   = {95--195},
  address = {Austin, TX},
  publisher = {University of Texas Press},
}

@book{Lab2001,
  author  = {Labov, William},
  title   = {Principles of linguistic change. {Volume} 2: {Social} factors},
  year    = {2001},
  publisher = {Blackwell},
  address = {Malden, MA},
}

@book{Lab2010,
  author  = {Labov, William},
  year    = {2010},
  title   = {Principles of linguistic change. {Volume} 3: {Cognitive} and cultural factors},
  publisher = {Wiley-Blackwell},
  address = {Chichester},
}

@book{MedLin2001,
  author  = {Medio, Alfredo and Lines, Marji},
  title   = {Nonlinear dynamics: a primer},
  year    = {2001},
  address = {Cambridge},
  publisher = {Cambridge University Press},
}

@incollection{Oha1983,
  author  = {Ohala, John J.},
  year  = {1983},
  title = {The origin of sound patterns in vocal tract constraints},
  editor  = {MacNeilage, Peter F.},
  booktitle = {The production of speech}, 
  pages = {189--216},
  address = {New York, NY},
  publisher = {Springer},
}

@article{Deo2015,
  author  = {Deo, Ashwini},
  year    = {2015},
  title   = {The semantic and pragmatic underpinnings of grammaticalization paths: the progressive to imperfective shift},
  journal = {Semantics \& Pragmatics},
  volume  = {8},
  number  = {14},
  pages   = {1--52},
  doi = {10.3765/sp.8.14}
}

@article{Jaeg2007,
  author  = {Jäger, Gerhard},
  year    = {2007},
  title   = {Evolutionary game theory and typology: a case study},
  journal = {Language},
  volume  = {83},
  number  = {1},
  pages   = {74--109},
  doi = {10.1353/lan.2007.0020},
}

@article{Jaeg2008,
  author  = {Jäger, Gerhard},
  year    = {2008},
  title   = {Applications of game theory in linguistics},
  journal = {Language and Linguistics Compass},
  volume  = {2},
  number  = {3},
  pages   = {406--421},
  doi = {10.1111/j.1749-818X.2008.00053.x},
}

@article{Yan2017,
  author  = {Yanovich, Igor},
  year    = {2017},
  title   = {Analyzing imperfective games},
  journal = {Semantics \& Pragmatics},
  volume  = {10},
  number  = {17},
  pages   = {1--23},
  doi = {10.3765/sp.10.17},
}

@article{AheCla2017,
author	= {Ahern, Christopher and Clark, Robin},
year	= {2017},
title	= {Conflict, cheap talk, and {Jespersen's} cycle},
journal	= {Semantics \& Pragmatics},
volume	= {10},
number	= {11},
pages = {1--40},
doi = {10.3765/sp.10.11},
}

@article{TayJon1978,
  author  = {Taylor, Peter D. and Jonker, Leo B.},
  year    = {1978},
  title   = {Evolutionarily stable strategies and game dynamics},
  journal = {Mathematical Biosciences},
  volume  = {40},
  pages   = {145--156},
}

@article{HeyWal2013,
  author  = {Heycock, Caroline and Wallenberg, Joel},
  year  = {2013},
  title = {How variational acquisition drives syntactic change: the loss of verb movement in {Scandinavian}},
  journal = {Journal of Comparative Germanic Linguistics}, 
  volume  = {16},
  pages = {127--157},
  doi = {10.1007/s10828-013-9056-0},
}

@book{San2010,
  author  = {Sandholm, William H.},
  year    = {2010},
  title   = {Population games and evolutionary dynamics},
  publisher = {MIT Press},
  address = {Cambridge, MA},
}

@book{Wei1995,
  author  = {Weibull, Jörgen W.},
  title   = {Evolutionary game theory},
  year    = {1995},
  publisher = {MIT Press},
  address = {Cambridge, MA},
}

@article{BaxCro2016,
  author    = {Baxter, Gareth J. and Croft, William},
  year      = {2016},
  title   = {Modeling language change across the lifespan: individual trajectories in community change},
  journal   = {Language Variation and Change},
  volume    = {28},
  pages     = {129--173},
  doi = {10.1017/S0954394516000077},
}

@book{Now2006,
  author  = {Nowak, Martin A.},
  year    = {2006},
  title   = {Evolutionary dynamics: exploring the equations of life},
  publisher = {Belknap},
  address = {Cambridge, MA},
}

@article{Bly2012,
  author  = {Blythe, Richard A.},
  year    = {2012},
  title   = {Neutral evolution: a null model for language dynamics},
  journal = {Advances in Complex Systems},
  volume  = {15},
  pages   = {1150015},
  doi = {10.1142/S0219525911003414},
}

@Manual{R,
  title        = {R: A Language and Environment for Statistical Computing},
  author       = {{R Core Team}},
  organization = {R Foundation for Statistical Computing},
  address      = {Vienna},
  year         = {2020},
  url          = {http://www.R-project.org/},
}

@article{Tru2008,
  author  = {Trudgill, Peter},
  year    = {2008},
  title   = {Colonial dialect contact in the history of {European} languages: on the irrelevance of identity to new-dialect formation},
  journal = {Language in Society},
  volume  = {37},
  pages   = {241--254},
  doi = {10.1017/S0047404508080287},
}

@article{Yan2000,
  author  = {Yang, Charles D.},
  year  = {2000},
  title = {Internal and external forces in language change},
  journal = {Language Variation and Change},
  volume  = {12},
  pages = {231--250},
  doi = {10.1017/S0954394500123014},
}

@book{Yan2002,
  author  = {Yang, Charles D.},
  year    = {2002},
  title   = {Knowledge and learning in natural language},
  publisher = {Oxford University Press},
  address = {Oxford},
}

@article{Mic2019,
author	= {Michaud, Jérôme},
year	= {2019},
title	= {Dynamic preferences and self-actuation of changes in language dynamics},
journal	= {Language Dynamics and Change},
volume	= {9},
number	= {1},
pages	= {61--103},
doi = {10.1163/22105832-00901003},
}

@article{Mic2017,
author	= {Michaud, Jérôme},
year	= {2017},
title	= {Continuous time limits of the utterance selection model},
journal	= {Physical Review E},
volume	= {95},
pages	= {022308},
doi	= {10.1103/PhysRevE.95.022308},
}

@article{BurEtAl2019,
  author  = {Burridge, James and Blaxter, Tam and Vaux, Bert},
  year    = {2020},
  title   = {Evolutionary paths of language},
  journal = {EPL},
  volume  = {128},
  pages   = {28003},
  doi = {10.1209/0295-5075/128/28003},
}

@book{Ant1989,
  author  = {Anttila, Raimo},
  title   = {Historical and comparative linguistics},
  edition = {2nd edition},
  year    = {1989},
  publisher = {Benjamins},
  address = {Amsterdam},
}

@book{Lab1972,
  author  = {Labov, William},
  year  = {1972},
  title = {Sociolinguistic patterns},
  publisher = {University of Pennsylvania Press},
  address = {Philadelphia, PA},
}

@incollection{Den2003,
  author  = {Denison, David},
  title   = {Log(ist)ic and simplistic {S}-curves},
  pages   = {54--70},
  year    = {2003},
  address = {Cambridge},
  publisher = {Cambridge University Press},
  booktitle = {Motives for Language Change},
  editor  = {Hickey, Raymond},
}

@Article{BaxEtal2006,
  Author         = {Baxter, Gareth J. and Blythe, R. A. and Croft, W. and
                   McKane, A. J.},
  Title          = {Utterance selection model of language change},
  Journal        = {Physical Review E},
  Volume         = {73},
  Pages          = {046118},
  year           = {2006},
  doi = {10.1103/PhysRevE.73.046118},
}

@Article{BlyCro2012,
  Author         = {Blythe, Richard A. and Croft, William},
  Title          = {S-curves and the mechanisms of propagation in language
                   change},
  Journal        = {Language},
  Volume         = {88},
  Number         = {2},
  Pages          = {269--304},
  doi = {10.1353/lan.2012.0027},
  year           = {2012},
}

@book{Mar1955,
  author  = {Martinet, André},
  year    = {1955},
  title   = {Economie des changements phonétiques},
  address = {Bern},
  publisher = {Francke},
}

@book{Eck2000,
  author  = {Eckert, Penelope},
  year    = {2000},
  title   = {Linguistic variation as social practice: the linguistic construction of identity in {Belten High}},
  publisher = {Blackwell},
  address = {Malden, MA},
}

@article{HofSig2003,
author  = {Hofbauer, Josef and Sigmund, Karl},
year  = {2003},
title = {Evolutionary game dynamics},
journal = {Bulletin of the American Mathematical Society},
volume  = {40},
number  = {4},
pages = {479--519},
doi = {10.1090/S0273-0979-03-00988-1},
}

@Book{HofSig1998,
  Author         = {Hofbauer, Josef and Sigmund, Karl},
  Title          = {Evolutionary games and population dynamics},
  Publisher      = {Cambridge University Press},
  Address        = {Cambridge},
  year           = {1998}
}

@article{Kom2004,
  author  = {Komarova, Natalia L.},
  year    = {2004},
  title   = {Replicator--mutator equation, universality property and population dynamics of learning},
  journal = {Journal of Theoretical Biology},
  volume  = {230},
  number  = {2},
  pages   = {227--239},
  doi = {10.1016/j.jtbi.2004.05.004},
}

@article{DuoHan2020,
  author  = {Duong, Manh Hong and Han, The Anh},
  year    = {2020},
  title   = {On equilibrium properties of the replicator--mutator equation in deterministic and random games},
  journal = {Dynamic Games and Applications},
  volume  = {10},
  pages   = {641--663},
  doi = {10.1007/s13235-019-00338-8},
}

@Article{KomEtal2001,
  Author         = {Komarova, Natalia L. and Niyogi, Partha and Nowak, Martin A.},
  Title          = {The evolutionary dynamics of grammar acquisition},
  Journal        = {Journal of Theoretical Biology},
  Volume         = {209},
  Pages          = {43--59},
  doi = {10.1006/jtbi.2000.2240},
  year           = {2001}
}

@incollection{Mit2006,
  Author         = {Mitchener, W. Garrett},
  Editor         = {Ritt, N. and Schendl, H. and Dalton-Puffer, C. and Kastovsky, D.},
  Title          = {A mathematical model of the loss of verb-second in
    {Middle English}},
  Pages          = {189--202},
  Publisher      = {Peter Lang},
  Address        = {Frankfurt am Main},
  Annotate       = {Taking Lightfoot's idea of cues in language
                   acquisition as starting point, presents a dynamical
                   systems model of the loss of V2 in ME (cf. Niyogi \&
                   Berwick's TLA approach, which cannot model this
                   phenomenon).},
  booktitle      = {Medieval {English} and its heritage},
  year           = 2006
}

@Article{Mit2003,
  Author         = {Mitchener, W. Garrett},
  Title          = {Bifurcation analysis of the fully symmetric language
                   dynamical equation},
  Journal        = {Journal of Mathematical Biology},
  Volume         = {46},
  Number         = {3},
  Pages          = {265--285},
  doi = {10.1007/s00285-002-0172-8},
  year           = {2003},
}

@Article{NiyBer1997,
  Author         = {Niyogi, Partha and Berwick, Robert C.},
  Title          = {A dynamical systems model for language change},
  Journal        = {Complex Systems},
  Volume         = {11},
  Pages          = {161--204},
  year           = {1997}
}

@article{FagEtal2010,
  author    = {Fagyal, Zsuzsanna and Swarup, Samarth and Escobar, Anna Mar\'ia and Gasser, Les and Lakkaraju, Kiran},
  year    = {2010},
  title   = {Centers and peripheries: network roles in language change},
  journal = {Lingua},
  volume  = {120},
  pages   = {2061--2079},
  doi = {10.1016/j.lingua.2010.02.001},
}

@Article{KauWal2018,
  author  = {Kauhanen, Henri and Walkden, George},
  title   = {Deriving the constant rate effect},
  journal = {Natural Language \& Linguistic Theory},
  year    = {2018},
  volume  = {36},
  number  = {2},
  pages   = {483--521},
  doi = {10.1007/s11049-017-9380-1},
}

\end{document}